\newtheorem{corollary}{Corollary}
\newtheorem{definition}{Definition}
\newcommand{\be}{\begin{equation}}
\newcommand{\ee}{\end{equation}}
\newcommand{\dis}{\displaystyle}
\newtheorem{thm}{Theorem}[section]
\newtheorem{prop}[thm]{Proposition}
\newtheorem{lemma}[thm]{Lemma}
\newtheorem{remark}[thm]{\it Remark}
\newtheorem{example}[thm]{\it Example}
\title{ Discrete Lax pairs and   hierarchies of integrable difference systems}
\author{Pavlos Kassotakis }
\email{pavlos1978@gmail.com}
\begin{document}
\maketitle

\begin{abstract}
We introduce a family of order $N\in \mathbb{N}$ Lax matrices that   is indexed by the natural number $k\in \{1,\ldots,N-1\}.$ For each value of   $k$ they serve as strong Lax matrices of a hierarchy of integrable difference systems in edge variables that in turn lead to hierarchies of integrable difference systems in vertex variables or in a combination of edge and vertex variables.   Furthermore, the entries of the Lax matrices are considered as elements of  a division ring, so we obtain  hierarchies of discrete integrable systems extended in the non-commutative  domain.
\end{abstract}

% Uncomment for keywords
\vspace{2pc}
\noindent{\it Keywords}: Discrete Lax pairs, hierarchies, non-commutative integrable difference systems, Yang-Baxter maps

\setcounter{tocdepth}{2}

\tableofcontents

%%%%%%%%%%%%%%%%%%%%%%%%
%%%%%%%%%%%%%%%%%%%%%%%%
%%%%%%%%%%%%%%%%%%%%%%%%

\section{Introduction}
Multi-component versions of scalar  integrable evolution equations were introduced  by the Russian school during  1970's. Gel'fand and Dikii \cite{Gel:1976}, Manin \cite{Manin:1979}, as well as Drinfeld and Sokolov \cite{Drinfeld:1981}, by studying higher-order spectral problems,  which generalize spectral problems of KdV type, they arrived to multi-component hierarchies of integrable evolution equations of KdV type. These studies on higher order-spectral problems were continued  by the seminal contributions of Mikhailov \cite{Mikhailov:1979}, Fordy et.al \cite{Fordy:1980,Fordy:1993}  and recently by Adler and Sokolov \cite{Adler:2021} in the non-commutative setting.

In the discrete scheme,
results on multi-component partial difference systems, or scalar partial difference equations defined on higher order stencils, are rather sparse \cite{Nijhoff:1996,Tongas:2004,Maruno:2010,Hietarinta:2011,JamesPhd,Hay:2014,Mikhailov:2016,Nalini:2018,Kels:2019,Kels:2019II,Kass2,Kamp:2020,Hietarinta:2020}. Moreover, there are just a few results on hierarchies of integrable partial difference equations, for example discrete analogues of the  Gel'fand-Dikii hierarchy, of the modified and the Schwarzian Gel'fand-Dikii hierarchies have been introduced in the past. Specifically in \cite{Nijhoff:1992}, a hierarchy of discrete equations  which under suitable continuous limits leads to the Gel'fand-Dikii hierarchy was presented and  it was referred to as the  {\it the lattice Gel'fand-Dikii hierarchy}. The first two members of the  lattice Gel'fand-Dikii hierarchy respectively are the lattice potential KdV equation and the lattice version of the Boussinesq equation. Furthermore, the first two members of the lattice-modified  Gel'fand-Dikii hierarchy, namely the lattice-modified KdV and Boussinesq equation  were also explicitly presented in \cite{Nijhoff:1992}. The full explicit presentation of all members of the lattice-modified  Gel'fand-Dikii hierarchy had to wait for \cite{JamesPhd} and \cite{Atkinson:2012}. Moreover, the first two members of the lattice-Schwarzian Gel'fand-Dikki hierarchy i.e. the lattice-Schwarzian KdV and lattice-Schwarzian Boussinesq equations,  were firstly introduced in \cite{Ni1} and the whole hierarchy was presented in \cite{JamesPhd}. In addition, an extension of the lattice-modified  Gel'fand-Dikii hierarchy to the non-commutative domain was considered in \cite{Doliwa_2013,Doliwa_2014}.

The results of this paper serve as a contribution to the growing interest of deriving and extending integrable difference systems to the non-commutative setting \cite{Nijhoff:1990,Boris:2000,Dimakis:2002,bs:2002N,Field:2005,Nimmo:2006,Doliwa_2013,Doliwa_Painleve_2013,Doliwa_2014,Grahovski:2016,Rizos:2016,Rizos:20182,Kass1,Noumi:2020}. Specifically, by introducing a family of discrete Lax matrices of order $N\in\mathbb{N}$ that we denote as $L^{N,k}$ (see Section \ref{secdef:lax}), with entries  elements of a division ring, for each value of the index $k\in\{1,\ldots,N-1\}$, we obtain a hierarchy of difference systems in non-commutative edge variables that in turn leads to hierarchies of difference systems in non-commutative vertex variables or in a combination of edge and vertex variables. Hierarchies of integrable difference systems that correspond to a Lax matrix with a specific index $k,$ can as well arise as reductions of hierarchies which correspond to a Lax matrix with index $k'>k$. In that respect, in this paper we present a hierarchy of hierarchies of integrable difference systems in edge and vertex variables.

The outline of this paper is as follows. In Section \ref{Section:2}, after introducing the notation and definitions used throughout this paper, we introduce the Lax matrices $L^{N,k}$ and the discrete spectral problem that they participate. Moreover, we prove that these Lax matrices are strong (see Definition \ref{Lax2:def1}) and we provide implicitly the associated family of hierarchies of difference systems in non-commutative edge variables. In Section \ref{Section:3}, we derive explicitly a hierarchy of difference systems in edge non-commutative variables associated with the Lax matrix $L^{N,1}$ for arbitrary $N$ and we prove integrability. Furthermore, we obtain the associated hierarchies of difference systems in vertex variables. We show that when certain centrality assumptions are imposed, the explicit form of $2N-$parameter extensions of the non-commutative lattice-modified and lattice-Schwarzian Gel'fand-Dikki hierarchies are obtained. Also, we provide explicitly  the first two members of the hierarchy of Yang-Baxter maps that correspond to these hierarchies and implicitly the full hierarchy of Yang-Baxter maps. In Section~\ref{Section:4} we obtain the explicit form of a hierarchy of difference systems in edge non-commutative variables associated with the Lax matrix $L^{N,2},$ that results a hierarchy in vertex variables. Furthermore, we show that this hierarchy includes both hierarchies, obtained by the linear problem associated with  $L^{N,1}$, as reductions. %Hence in a way it serves as a hierarchy of hierarchies.
We end this paper with  Section \ref{Section:5}, where conclusions and perspectives for future research are presented.

\section{Notation, definitions and the family of Lax matrices $L^{N,k}$} \label{Section:2}
The $\mathbb{Z}^2$ graph is defined as the graph with set of vertices $V=\left\{(m,n)  |   m,n \in {\mathbb Z} \right\}$ and set of edges \hbox{$E=E_H \sqcup E_V $}, i.e. the disjoint union of
horizontal edges $E_H= \left\{ \{(m,n),(m+1,n)\}  |  m,n \in {\mathbb Z} \right\}$ and the  vertical ones $E_V=\left\{ \{(m,n),(m,n+1)\}  |   m,n \in {\mathbb Z} \right\}$ (see Figure \ref{fig00}).
\begin{figure}[htb]
\begin{center}
\begin{minipage}[h]{0.45\textwidth}\resizebox{1.0\textwidth}{!}{ \begin{tikzpicture}[>=stealth',node distance=1.5cm, thick,main node/.style={draw,circle,inner sep=1.5pt,minimum size=2.5pt,fill=black,font=\bfseries},shorten > = 1pt,
node distance = 3cm and 4cm,el/.style = {inner sep=2pt, align=left, sloped},
every label/.append style = {font=\tiny}]
  \node[main node] (1) {}; \node[main node] (11) [right of=1] {}; \node[main node] (111) [right of=11] {};
  \node[main node] (2) [below  of=1] {};  \node[main node] (21) [right of=2] {};  \node[main node] (211) [right of=21] {};
  \node[main node] (3) [above of=1] {};   \node[main node] (31) [right of=3] {};   \node[main node] (311) [right of=31] {};

  \node (31l) [left of=11, node distance=0.9cm,label={(m+1,n+1)}] {};
     \node (21l) [left of=21, node distance=0.9cm,label={(m+1,n)}] {};

      \node (-2) [below of=2, node distance=0.9cm] {};       \node (-3) [above of=3,node distance=0.9cm] {};
      \node (-21) [below of=21, node distance=0.9cm] {};     \node (3l) [left of=3,node distance=0.9cm] {};
       \node (-31) [above of=31,node distance=0.9cm] {};
  \node (-211) [below of=211, node distance=0.9cm] {};       \node (-311) [above of=311,node distance=0.9cm] {};
             \node (2l) [left of=2, node distance=0.9cm,label={(m,n)}] {};
             \node (1l) [left of=1,node distance=0.9cm,label={(m,n+1)}] {};             \node (2r) [right of=211, node distance=0.9cm] {};
             \node (3r) [right of=311,node distance=0.9cm] {}; \node (1r) [right of=111,node distance=0.9cm] {};
             \node (211l) [left of=211, node distance=0.9cm,label={(m+2,n)}] {};
                          \node (3l) [left of=3, node distance=0.9cm,label={(m,n+2)}] {};
               \path
  (-2) edge[dotted] (2) edge[dotted] (1) edge[dotted] (3) edge[dotted] (-3)
  (-21) edge[dotted]    (21) edge[dotted] (11)  edge[dotted] (31) edge[dotted] (-31)
  (-211) edge[dotted]      (211) edge[dotted] (111) edge[dotted] (311) edge[dotted] (-311)
(1l) edge[dotted]    (1) edge[dotted] (11) edge[dotted] (111) edge[dotted] (1r)
(2l) edge[dotted]    (2) edge[dotted] (21) edge[dotted] (211)  edge[dotted] (2r)
(3l) edge[dotted]    (3) edge[dotted] (31) edge[dotted] (311)  edge[dotted] (3r)
  ;
  \end{tikzpicture}}
 \captionsetup{font=footnotesize}%,calcwidth=0.8\columnwidth
\captionof*{figure}{(a)}
\end{minipage}
\begin{minipage}[h]{0.42\textwidth}
 \resizebox{1.0\textwidth}{!}{ \begin{tikzpicture}[>=stealth',node distance=1.5cm, thick,main node/.style={draw,circle,inner sep=0,minimum size=0.5pt,fill=black,font=\bfseries},shorten > = 1pt,
node distance = 3cm and 4cm,el/.style = {inner sep=2pt, align=left, sloped},
every label/.append style = {font=\tiny}]
  \node[main node] (1) {};
  \node[main node] (2) [below  of=1] {};
  \node[main node] (3) [above of=1] {};
  \node[main node] (11) [right of=1] {}; \node[main node] (111) [right of=11] {};
 \node[main node] (21) [right of=2] {};  \node[main node] (211) [right of=21] {};
  \node[main node] (31) [right of=3] {};   \node[main node] (311) [right of=31] {};
      \node (-2) [below of=2, node distance=0.9cm] {};       \node (-3) [above of=3,node distance=0.9cm] {};
      \node (-21) [below of=21, node distance=0.9cm] {};       \node (-31) [above of=31,node distance=0.9cm] {};
      \node (-211) [below of=211, node distance=0.9cm] {};       \node (-311) [above of=311,node distance=0.9cm] {};
             \node (2l) [left of=2, node distance=0.9cm] {};       \node (3l) [left of=3,node distance=0.9cm] {}; \node (1l) [left of=1,node distance=0.9cm] {};
             \node (2r) [right of=211, node distance=0.9cm] {};       \node (3r) [right of=311,node distance=0.9cm] {}; \node (1r) [right of=111,node distance=0.9cm] {};
  \path
  (-2) edge[dashed] (2) edge[dashed] (1) edge[dashed] (3)   edge[dashed] (-3)
  (-21) edge[dashed]    (21) edge[dashed] (11)  edge[dashed] (31) edge[dashed] (-31)
  (-211) edge[dashed]      (211) edge[dashed] (111) edge[dashed] (311) edge[dashed] (-311)
(1l) edge    (1) edge (11) edge (111) edge (1r)
(2l) edge    (2) edge (21) edge (211)  edge (2r)
(3l) edge    (3) edge[dotted] (31) edge (311)  edge (3r)  ;
  \draw  (2) -- node [scale=0.6,below=2pt,fill=white]  { $ (m+1/2,n)$} (21); \draw  (21) -- node [scale=0.6,below=2pt,fill=white]  { $ (m+3/2,n)$} (211);
  \draw  (1) -- node [scale=0.6,below=2pt,fill=white]  { $ (m+1/2,n+1)$} (11);
  \draw[dotted]  (2)  -- node [scale=0.6,rotate=90,above=4pt,fill=white]  { $ (m,n+1/2)$} (1);\draw[dotted]  (1)  -- node [scale=0.6,rotate=90,above=4pt,fill=white]  { $ (m,n+3/2)$} (3);
    \draw[dotted]  (21)  -- node [scale=0.6,rotate=90,above=4pt,fill=white]  { $ (m+1,n+1/2)$} (11);
  %node[above,midway,rotate=90] {$m,n+1/2$}
   \end{tikzpicture}}
\captionsetup{font=footnotesize}%,calcwidth=0.8\columnwidth
\captionof*{figure}{(b)}
\end{minipage}
\caption{(a): The set $V$ where the dependent variables of the vertex equations are assigned. (b): The sets of horizontal $E_H$ (solid lines) and the set of vertical edges $E_V$ (dashed lines) where the dependent variables  of the edge equations are assigned.}
\label{fig00}
\end{center}
\end{figure}
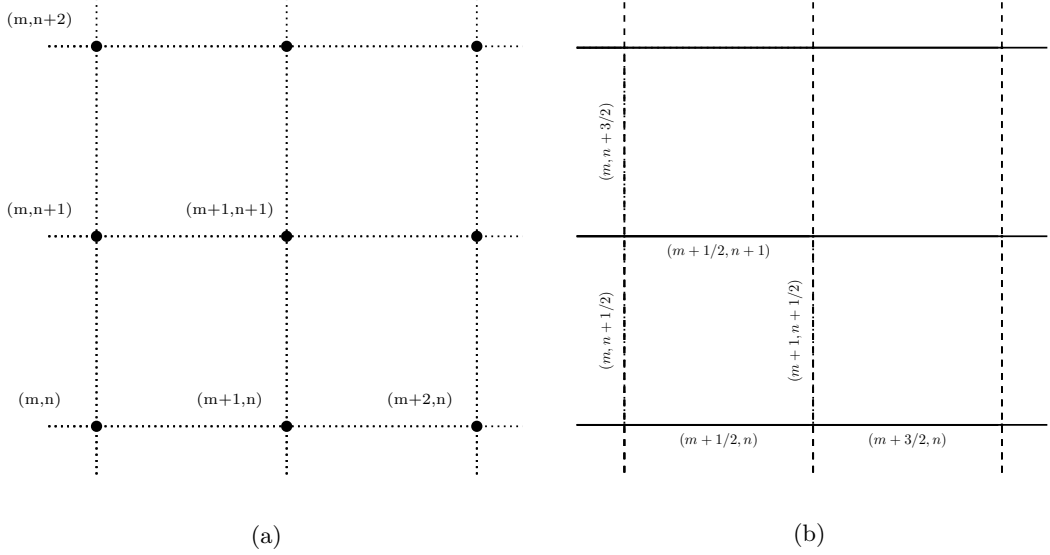
 It is convenient to label with $(m,n) \in {\mathbb Z}^2,$
$(m+1/2,n) \in {\mathbb Z}^2$ and $(m,n+1/2) \in {\mathbb Z}^2$   the elements of $V,$ $E_H$ and $E_V,$  respectively.

We consider the functions $\phi^i: V\ni (m,n) \mapsto \phi^{i}_{m,n}\in \mathbb{D},$   $x^i: E_H\ni (m,n+1/2) \mapsto x^{i}_{m,n+1/2}\in \mathbb{D}$ and  $y^i: E_V\ni (m+1/2,n)\mapsto y^{i}_{m+1/2,n} \in \mathbb{D},$  $i=1,2,\ldots,N,$ where $\mathbb{D}$ a non-commutative division ring f.i. an associative algebra over the field of complex numbers $\mathbb{C}$ with a multiplicative identity element denoted by $1$ and  every non-zero element $x\in \mathbb{D}$ has a unique multiplicative inverse denoted by $x^{-1}$ s.t. $xx^{-1}=x^{-1}x=1$.  We also  consider the functions $p^i: E_h\ni(m+1/2,n)\mapsto p^i_{m+1/2}\in\mathbb{D}$ and $q^i: E_v\ni(m,n+1/2)\mapsto q^i_{n+1/2}\in\mathbb{D}$ $i=1,2,\ldots,N,$ that we assume that they are elements of the center of the algebra $\mathbb{D}$ i.e. they commute with every element of $\mathbb{D}$. In this article, the functions $p^i, q^i$ will be simply referred to as {\it parameters}.
We simplify the notation above by denoting $ \phi^i:=\phi^i_{m,n},$ $\phi^i_2:=\phi^i_{m,n+1},$ $\phi^i_1:=\phi^i_{m+1,n}$, $\phi^i_{12}:=\phi^i_{m+1,n+1},$ $i=1,2,\ldots,N$ etc. By slightly abusing this notation, we also denote $ x^i:=x^i_{m+1/2,n},$ $ x^i_2:=x^i_{m+1/2,n+1},$ $ y^i_1:=y^i_{m+1,n+1/2},$ $y^i_{2}:=y^i_{m,n+3/2},$ $i=1,2,\ldots,N$ etc. (see Figure \ref{notation1}).
\begin{figure}[htb]
\begin{minipage}[h]{0.45\textwidth}
 \resizebox{1.1\textwidth}{!}{ \begin{tikzpicture}[fill=white,>=stealth',node distance=1.5cm, thick,main node/.style={fill=white,font=\bfseries},shorten > = 1pt,
node distance = 3cm and 4cm,el/.style = {inner sep=2pt, align=left, sloped},
every label/.append style = {font=\tiny},fill=white, scale=1.8, every node/.style={transform shape}]
  \node[main node,fill=white] (1) {};
  \node[main node] (2) [below  of=1] {};
  \node[main node] (3) [above of=1] {};
  \node[main node] (11) [right of=1] {}; \node[main node] (111) [right of=11] {};
 \node[main node] (21) [right of=2] {};  \node[main node] (211) [right of=21] {};
  \node[main node] (31) [right of=3] {};   \node[main node] (311) [right of=31] {};
      \node (-2) [below of=2, node distance=0.9cm] {};       \node (-3) [above of=3,node distance=0.9cm] {};
      \node (-21) [below of=21, node distance=0.9cm] {};       \node (-31) [above of=31,node distance=0.9cm] {};
      \node (-211) [below of=211, node distance=0.9cm] {};       \node (-311) [above of=311,node distance=0.9cm] {};
             \node (2l) [left of=2, node distance=0.9cm] {};       \node (3l) [left of=3,node distance=0.9cm] {}; \node (1l) [left of=1,node distance=0.9cm] {};
             \node (2r) [right of=211, node distance=0.9cm] {};       \node (3r) [right of=311,node distance=0.9cm] {}; \node (1r) [right of=111,node
             distance=0.9cm] {};
  \draw  (2) -- node [scale=1,below=2pt,fill=white]  { $ x^i_{m+1/2,n}$} node [scale=1,above=2pt,fill=white]  { $ p^i_{m+1/2}$} (21);
  \draw  (1) -- node [scale=1,below=2pt,fill=white]  { $ x^i_{m+1/2,n+1}$} node [scale=1,above=2pt,fill=white]  { $ p^i_{m+1/2}$} (11);
  \draw  (2)   --  node [scale=1,rotate=90,above=4pt,fill=white]  { $ y^i_{m,n+1/2}$} node [scale=1,below=0pt,rotate=90,fill=white]  { $ q^i_{n+1/2}$} (1);
    \draw[dotted]  (21)  --  node[scale=1,left=1pt,rotate=90,above=4pt,fill=white]  { $ y^i_{m+1,n+1/2}$} node [scale=1,below=0pt,rotate=90,fill=white]  { $ q^i_{n+1/2}$} (11);
      \draw  (21) -- node [scale=1,below=2pt,fill=white]  { $ x^i_{m+3/2,n}$} node [scale=1,above=2pt,fill=white]  { $ p^i_{m+3/2}$} (211);
   \draw[dotted]  (1)  --  node[scale=1,left=4pt,rotate=90,above=4pt,fill=white]  { $ y^i_{m,n+3/2}$} node [scale=1,below=0pt,rotate=90,fill=white]  { $ q^i_{n+3/2}$} (3);
  \path
  (-2) edge[dashed,fill=white] (2) edge[dashed,fill=white] (1) edge[dashed,fill=white] (3)   edge[dashed,fill=white] (-3)
  (-21) edge[dashed,fill=white]    (21) edge[dashed,fill=white] (11)  edge[dashed,fill=white] (31) edge[dashed,fill=white] (-31)
  (-211) edge[dashed,fill=white]      (211) edge[dashed,fill=white] (111) edge[dashed,fill=white] (311) edge[dashed,fill=white] (-311)
(1l) edge    (1) edge (11) edge (111) edge (1r)
(2l) edge    (2) edge (21) edge (211)  edge (2r)
(3l) edge    (3) edge[dotted] (31) edge (311)  edge (3r)  ;
  %node[above,midway,rotate=90] {$m,n+1/2$}
  \draw (1) node [scale=1,fill=white] {$\phi^i_{m,n+1}$};  \draw (11) node [scale=1,fill=white] {$\phi^i_{m+1,n+1}$};  \draw (2) node [scale=1,fill=white] {$\phi^i_{m,n}$}; \draw (21) node [scale=1,fill=white] {$\phi^i_{m+1,n}$};\draw (211) node [scale=1,fill=white] {$\phi^i_{m+2,n}$}; \draw (3) node [scale=1,fill=white] {$\phi^i_{m,n+2}$};
   \end{tikzpicture}}
\captionsetup{font=footnotesize}%,calcwidth=0.8\columnwidth
\captionof*{figure}{(a)}
\end{minipage}\;\;\;\;\;\;
\begin{minipage}[h]{0.45\textwidth}
 \resizebox{1.1\textwidth}{!}{ \begin{tikzpicture}[fill=white,>=stealth',node distance=1.5cm, thick,main node/.style={fill=white,font=\bfseries},shorten > = 1pt,
node distance = 3cm and 4cm,el/.style = {inner sep=2pt, align=left, sloped},
every label/.append style = {font=\tiny},fill=white, scale=1.8, every node/.style={transform shape}]
  \node[main node,fill=white] (1) {};
  \node[main node] (2) [below  of=1] {};
  \node[main node] (3) [above of=1] {};
  \node[main node] (11) [right of=1] {}; \node[main node] (111) [right of=11] {};
 \node[main node] (21) [right of=2] {};  \node[main node] (211) [right of=21] {};
  \node[main node] (31) [right of=3] {};   \node[main node] (311) [right of=31] {};
      \node (-2) [below of=2, node distance=0.9cm] {};       \node (-3) [above of=3,node distance=0.9cm] {};
      \node (-21) [below of=21, node distance=0.9cm] {};       \node (-31) [above of=31,node distance=0.9cm] {};
      \node (-211) [below of=211, node distance=0.9cm] {};       \node (-311) [above of=311,node distance=0.9cm] {};
             \node (2l) [left of=2, node distance=0.9cm] {};       \node (3l) [left of=3,node distance=0.9cm] {}; \node (1l) [left of=1,node distance=0.9cm] {};
             \node (2r) [right of=211, node distance=0.9cm] {};       \node (3r) [right of=311,node distance=0.9cm] {}; \node (1r) [right of=111,node
             distance=0.9cm] {};
  \draw  (2) -- node [scale=1,below=2pt,fill=white]  { $ x^i$} node [scale=1,above=2pt,fill=white]  { $ p^i$} (21);
  \draw  (1) -- node [scale=1,below=2pt,fill=white]  { $ x^i_2$} node [scale=1,above=2pt,fill=white]  { $ p^i$} (11);
  \draw  (2)   --  node [scale=1,rotate=90,above=4pt,fill=white]  { $ y^i$} node [scale=1,below=0pt,rotate=90,fill=white]  { $ q^i$} (1);
    \draw[dotted]  (21)  --  node[scale=1,left=10pt,rotate=90,above=4pt,fill=white]  { $ y^i_1$} node [scale=1,below=0pt,rotate=90,fill=white]  { $ q^i$} (11);
      \draw  (21) -- node [scale=1,below=2pt,fill=white]  { $ x^i_1$} node [scale=1,above=2pt,fill=white]  { $ p^i_1$} (211);
   \draw[dotted]  (1)  --  node[scale=1,left=4pt,rotate=90,above=4pt,fill=white]  { $ y^i_{2}$} node [scale=1,below=0pt,rotate=90,fill=white]  { $ q^i_2$} (3);
  \path
  (-2) edge[dashed,fill=white] (2) edge[dashed,fill=white] (1) edge[dashed,fill=white] (3)   edge[dashed,fill=white] (-3)
  (-21) edge[dashed,fill=white]    (21) edge[dashed,fill=white] (11)  edge[dashed,fill=white] (31) edge[dashed,fill=white] (-31)
  (-211) edge[dashed,fill=white]      (211) edge[dashed,fill=white] (111) edge[dashed,fill=white] (311) edge[dashed,fill=white] (-311)
(1l) edge    (1) edge (11) edge (111) edge (1r)
(2l) edge    (2) edge (21) edge (211)  edge (2r)
(3l) edge    (3) edge[dotted] (31) edge (311)  edge (3r)  ;
  %node[above,midway,rotate=90] {$m,n+1/2$}
  \draw (1) node [scale=1,fill=white] {$\phi^i_2$};  \draw (11) node [scale=1,fill=white] {$\phi^i_{12}$};  \draw (2) node [scale=1,fill=white] {$\phi^i$}; \draw (21) node [scale=1,fill=white] {$\phi^i_1$};\draw (211) node [scale=1,fill=white] {$\phi^i_{11}$}; \draw (3) node [scale=1,fill=white] {$\phi^i_{22}$};
   \end{tikzpicture}}
   \captionsetup{font=footnotesize}%,calcwidth=0.8\columnwidth
\captionof*{figure}{(b)}
\end{minipage}
\caption{Dependent variables assigned on  the vertices and on the edges of the  ${\mathbb Z}^2$ graph.  (a): Standard notation. (b): Notation used in the paper}\label{notation1}
\end{figure}
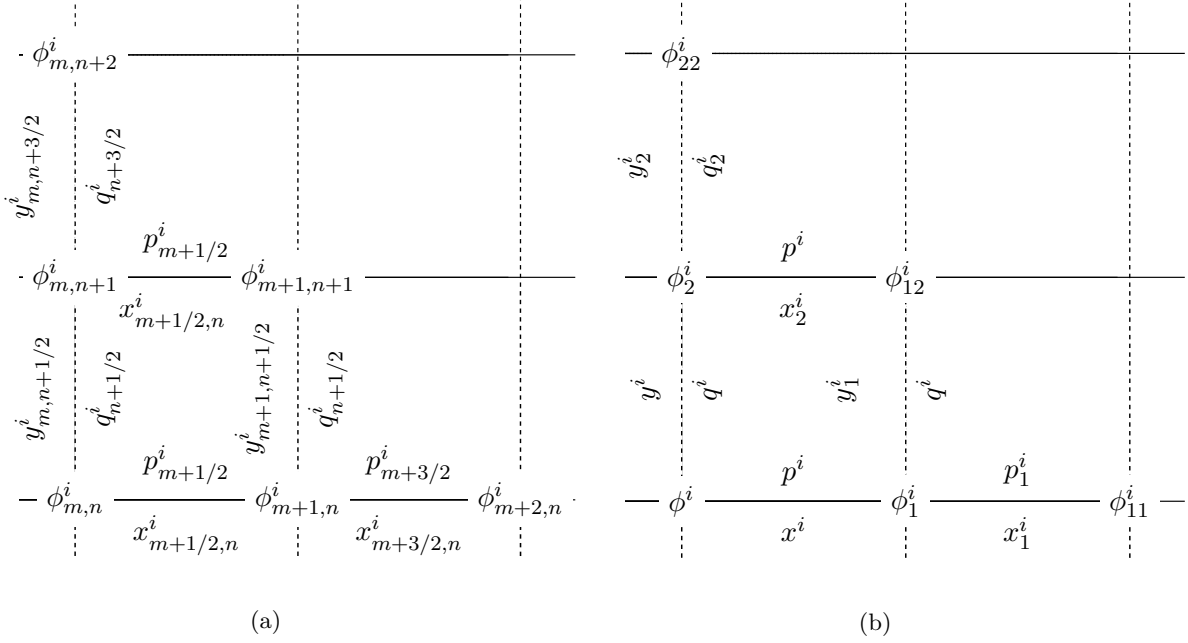

A set of equations involving  edge variables f.i. $x^i, y^i$ $i=1,2,\ldots, N$  and a finite number of their shifts, is called  {\em system of difference equations in edge variables}.
A set of equations involving   vertex variables f.i. $\phi^i, i=1,2,\ldots, N$   and a finite number of their shifts, is called  {\em system of difference equations in vertex variables}. Finally, a set of equations involving  edge and vertex variables f.i. $x^i, y^i$ $i=1,2,\ldots, N_1,$  $\phi^i, i=1,2,\ldots, N_2,$  and a finite number of their shifts, is called  {\em system of difference equations in edge and vertex variables}.

Let us denote with $X,$ respectively $Y,$ the sets $\{x^1,\ldots,x^N\},$ respectively $\{y^1,\ldots,y^N\},$ of dependent variables, as well as the sets of parameters $P:=\{p^1,\ldots, p^N\}$ and $Q:=\{q^1,\ldots, q^N\},$ $N\in\mathbb{N}.$ We now proceed to the following definitions.
\begin{definition} \label{Lax2:def1}
A matrix of order $N,$ $L(X;P,\lambda)$ is called a Lax matrix of  the difference system in edge variables
\begin{align} \label{eq-edge}
 x^i_2=F^i(X,Y;P,Q),\quad  y^i_1=G^i(X,Y;P,Q),\quad i=1,\ldots, N,\;\; N\in \mathbb{N}
\end{align}
where $F^i, G^i,\;\;i=1,\ldots ,N$ are  functions of the indicated variables,  if (\ref{eq-edge})
 implies that
\begin{align} \label{lax-eq}
L(X_2;P,\lambda)\, L(Y;Q,\lambda) =
L(Y_1;Q,\lambda)\, L(X;P,\lambda),
\end{align}
holds for all $\lambda$ where $\lambda$ the spectral parameter.
$L(X;P,\lambda)$ is called a strong Lax matrix of
(\ref{eq-edge}), if the converse also holds.
\end{definition}
The matrix equation (\ref{lax-eq}) is referred to as {\it the discrete Lax equation} or the discrete {\it zero-curvature condition} and arises as the compatibility condition of the following linear system:
\begin{align*}
 \Psi_2=L(X;P,\lambda) \Psi, \quad  \Psi_1=L(Y;Q,\lambda) \Psi,
\end{align*}
where $\Psi$ stands for an $N-$component vector.

\begin{definition} \label{def2}
The difference system in edge variables (\ref{eq-edge}) will be called {\em birational} if it implies
\begin{align} \label{eq-edge-b}
 x^i=f^i(X_2,Y_1;P,Q),\quad  y^i=g^i(X_2,Y_1;P,Q),\quad i=1,\ldots, N,\;\; N\in \mathbb{N},
\end{align}
where $f^i, g^i,\;\;i=1,\ldots ,N$ are  functions of the indicated variables. Moreover, a birational system f.i. (\ref{eq-edge}),  will be called {\em quadrirational} \cite{ABS:YB} if it implies that
\begin{align} \label{eq-edge-q}
 x^i=H^i(X_2,Y;P,Q),\quad  y^i_1=K^i(X_2,Y;P,Q),\quad i=1,\ldots, N,\;\; N\in \mathbb{N},
\end{align}
where $H^i, K^i,\;\;i=1,\ldots ,N$ are  functions of the indicated variables and (\ref{eq-edge-q}) is a birational system. Furthermore, system (\ref{eq-edge-q}) will be referred to as the {\em companion system} of the quadrirational system in edge variables (\ref{eq-edge}).
\end{definition}

\begin{definition}  \label{adm:Mob}
The following change of  dependent variables
\begin{align*}
x^i_2\mapsto  \left(a^i(P)x^i_2+b^i(P)\right)\left(c^i(P)x^i_2+d^i(P)\right)^{-1},&&x^i\mapsto\left(a^i(P)x^i+b^i(P)\right)\left(c^i(P)x^i+d^i(P)\right)^{-1},\\
y^i_1\mapsto  \left(a^i(Q)y^i_1+b^i(Q)\right)\left(c^i(P)y^i_1+d^i(P)\right)^{-1},&&y^i\mapsto\left(a^i(Q)y^i+b^i(Q)\right)\left(c^i(P)y^i+d^i(P)\right)^{-1},
\end{align*}
or
\begin{align*}
x^i_2\mapsto  \left(a^i(P)x^i_2+b^i(P)\right)^{-1}\left(c^i(P)x^i_2+d^i(P)\right),&&x^i\mapsto\left(a^i(P)x^i+b^i(P)\right)^{-1}\left(c^i(P)x^i+d^i(P)\right),\\
y^i_1\mapsto  \left(a^i(Q)y^i_1+b^i(Q)\right)^{-1}\left(c^i(P)y^i_1+d^i(P)\right),&&y^i\mapsto\left(a^i(Q)y^i+b^i(Q)\right)^{-1}\left(c^i(P)y^i+d^i(P)\right),
\end{align*}
$\forall i\in \{1,\ldots,N\},$ where $a^i,b^i,c^i$ and $d^i$   functions of the indicated parameters,
will be called {\it admissible Mobi\"us  transformations} or simply $(M\ddot{o}b)^2$ transformations %\cite{PSTV}
 of the difference systems in edge variables (\ref{eq-edge}).
Two difference systems in edge variables which are related by an admissible Mobi\"us transformation will be called {\it equivalent}.
\end{definition}
Note that admissible Mobi\"us  transformations respect the {\it multidimensional compatibility} a.k.a integrability of  a difference system in edge variables \cite{Papageorgiou:2010}.

Lets denote with $\Phi$ the set $\{\phi^1,\ldots, \phi^M\},\; M\in\mathbb{N}$ of the dependent variables. We proceed to the following definition
\begin{definition}
A matrix of order $N,$ $L(\Phi_1,\Phi;P,\lambda)$ is called a Lax matrix of  the difference system in vertex variables
\begin{align} \label{eq-vertex}
 \phi^i_{12}=H^i(\Phi,\Phi_1,\Phi_2;P,Q),\quad i=1,\ldots, M,\;\; M\in \mathbb{N}
\end{align}
where $H^i,\;\;i=1,\ldots , M$ are  functions of the indicated variables,  if (\ref{eq-vertex})
 implies that
\begin{equation} \label{lax-eq-ver}
L(\Phi_{12}, \Phi_2;P,\lambda)\, L(\Phi_2,\Phi;Q,\lambda) =
L(\Phi_{12}, \Phi_1;Q,\lambda)\, L(\Phi_1,\Phi;P,\lambda),
\end{equation}
holds for all $\lambda$ where $\lambda$ the spectral parameter.
$L(\Phi_1,\Phi;\lambda)$ is called a strong Lax matrix of
$(\ref{eq-vertex}),$ if the converse also holds.
\end{definition}

\begin{definition}
The order $N$ lower-triangular nilpotent matrices $\nabla^k,$ $k=1,2,\ldots, N-1$ are defined by
$$
(\nabla^k)_{ij}:=\left\{ \begin{array}{ll}
            0,& i\leq j \\
            \delta_{i,j+k},& i>j
                \end{array} \right.
$$
and are said to have {\it level} $-k.$
\end{definition}

\begin{definition}
The order $N$ upper-triangular nilpotent matrices $\Delta^k,$ $k=1,2,\ldots, N-1$ are defined by
$$
(\Delta^k)_{ij}:=\left\{ \begin{array}{ll}
            \delta_{i+N-k,j},& i< j \\
              0,& i\geq j
                \end{array} \right.
                $$
and are said to have {\it level} $N-k.$
\end{definition}
\begin{remark} \label{rem1}
For $1\leq k,l \leq N-1$ it holds:
$$
\nabla^k \nabla^l= \left\{ \begin{array}{ll}
            \nabla^{k+l},& k+l\leq N-1 \\
              0,& k+l>N-1
                \end{array} \right.
, \quad \Delta^k \Delta^l= \left\{ \begin{array}{ll}
            \Delta^{k+l}\; (mod \;\;N),& k+l\geq N+1 \\
              0,& k+l<N+1
                \end{array} \right.
$$
Let $A$ an order $N$ diagonal matrix. Clearly the matrices $\nabla^k A$ and  $\Delta^k A$ have level $-k$  and level $N-k$ respectively.
\end{remark}

\subsection{The family of Lax matrices $L^{N,k}$} \label{secdef:lax}

Consider the following family of Lax matrices of order $N\in \mathbb{N},$
\begin{align*}
L^{N,k}(X;P,\lambda):=P+X^\nabla+\lambda\, X^\Delta, \quad N\in \mathbb{N},\quad k\in\{ 1,2,\ldots, N-1\},
\end{align*}
where
\begin{align*}
X^\nabla:=\sum_{i=1}^{k}\nabla^{i}X^{(i)},&& X^\Delta:=\sum_{i=1}^{k}\Delta^i  X^{(i)},
\end{align*}
 with $X^{(j)},$ $j=1,\ldots,k$ the order $N$ diagonal matrices with  entries $(X^{(j)})_{i,i}:=x^{j,i},$ that stands for the shorthand notation of $x^{j,i}_{m+1/2,n},$ $j=1,\ldots, k,$  $i=1,\ldots,N,$ $m,n\in \mathbb{Z}.$ Also
with $\lambda$ we denote the spectral parameter and with $P$ the order $N$ diagonal matrix with entries $(P)_{i,i}:=p^i,$ where, as it was defined earlier, $p^i$ stands for the shorthand notation of $p^i_{m+1/2},$ $m\in \mathbb{Z}$.

Let us also denote with $Q$ the order $N$ diagonal matrix with entries $(Q)_{i,i}:=q^i\equiv q^i_{n+1/2},$ $i=1,\ldots, N,$ $n\in \mathbb{Z}$  and with
$Y^{(j)},$ $j=1,2,\ldots,k$ the order $N$ diagonal matrices with  entries $(Y^{(j)})_{i,i}:=y^{j,i}\equiv y^{j,i}_{m,n+1/2},$ $j=1,\ldots, k,\; i=1,2,\ldots,N,$ $m,n\in \mathbb{Z}.$ Note that  the dependent variables $x^{j,i},y^{j,i}$ are considered elements of a division ring $\mathbb{D}$ and the parameters $p^i,q^i,$ as well as the spectral parameter $\lambda$, are assumed central elements of $\mathbb{D}$ i.e. they commute with any element of $\mathbb{D}$.

The discrete Lax equation
\begin{align*}
L^{N,k}(X_2;P,\lambda)L^{N,k}(Y;Q,\lambda)=L^{N,k}(Y_1;Q,\lambda)L^{N,k}(X;P,\lambda)
\end{align*}
reads:
\begin{equation}\label{comp:1}
\begin{array}{ll}
&\left(P+X_2^{\nabla}\right)\left(Q+Y^{\nabla}\right)+\lambda\left(\left(P+X_2^{\nabla}\right)Y^{\Delta}+X_2^{\Delta}\left(Q+Y^{\nabla}\right) \right) +\lambda^2 X_2^{\Delta}Y^{\Delta}\\ [3mm]
=&\left(Q+Y_1^{\nabla}\right)\left(P+X^{\nabla}\right)+\lambda\left(\left(Q+Y_1^{\nabla}\right)X^{\Delta}+Y_1^{\Delta}\left(P+X^{\nabla}\right) \right) +\lambda^2 Y_1^{\Delta}X^{\Delta}.
\end{array}
\end{equation}
The requirement that (\ref{comp:1})  holds for all $\lambda$ implies
\begin{align} \label{eq0}
\left(P+X_2^{\nabla}\right)\left(Q+Y^{\nabla}\right)=\left(Q+Y_1^{\nabla}\right)\left(P+X^{\nabla}\right),\\ \label{eq1}
\left(P+X_2^{\nabla}\right)Y^{\Delta}+X_2^{\Delta}\left(Q+Y^{\nabla}\right) = \left(Q+Y_1^{\nabla}\right)X^{\Delta}+Y_1^{\Delta}\left(P+X^{\nabla}\right), \\ \label{eq2}
X_2^{\Delta}Y^{\Delta}=Y_1^{\Delta}X^{\Delta}.
\end{align}

\begin{prop}
The matrix equations (\ref{eq0})-(\ref{eq2}) consist of $2kN$ scalar equations in total.
\end{prop}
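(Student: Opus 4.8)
The plan is to remove the casework that the three levels $\lambda^0,\lambda^1,\lambda^2$ would otherwise force, by passing to a homogeneous ``cyclic'' description of $L^{N,k}$. I introduce a parameter $\mu$ with $\mu^N=\lambda$ and conjugate the Lax equation by the constant diagonal matrix $D:=\mathrm{diag}(1,\mu,\dots,\mu^{N-1})$. Since $(DMD^{-1})_{ab}=\mu^{a-b}M_{ab}$ for any matrix $M$, the conjugate $\widetilde L:=D\,L^{N,k}(X;P,\lambda)\,D^{-1}$ becomes homogeneous in $\mu$: one checks that $\widetilde L=P+\sum_{i=1}^{k}\mu^{i}C_i$, where $C_i:=Z^{i}\widehat X^{(i)}$ is a full ``weighted cyclic shift''. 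Here $Z$ is the cyclic shift with $Z_{ab}=\delta_{a,\,b+1\,(\mathrm{mod}\,N)}$ and $\widehat X^{(i)}$ is diagonal carrying the $N$ entries of $X^{(i)}$. The key point is that the subdiagonal part $\nabla^{i}X^{(i)}$ (rescaled by $\mu^{i}$) and the $\lambda$--weighted superdiagonal part $\Delta^{i}X^{(i)}$ (rescaled by $\lambda\mu^{i-N}=\mu^{i}$, using $\mu^N=\lambda$) both sit on the single cyclic diagonal $\{(a,b):a\equiv b+i\ (\mathrm{mod}\,N)\}$, which $C_i$ therefore occupies completely. Crucially, conjugation by $D$ does not change the number of scalar equations: in position $(a,b)$ it multiplies the $\lambda$--polynomial entry by $\mu^{a-b}$, and since the exponents $pN+(a-b)$ are distinct for distinct $p$, the nonzero $\lambda$--coefficients are in bijection with the nonzero $\mu$--coefficients. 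Hence it suffices to count the scalar equations produced by reading $\widetilde L(X_2)\widetilde L(Y)=\widetilde L(Y_1)\widetilde L(X)$ in powers of $\mu$.

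Next I would expand the conjugated product. With $C_0=P$ on a left factor and $C_0=Q$ on a right factor, one gets $\widetilde L(X_2)\widetilde L(Y)=\sum_{m=0}^{2k}\mu^m N_m$ with $N_m=\sum_{i+j=m,\,0\le i,j\le k}C_i(X_2)\,C_j(Y)$. Each summand factors as $C_i(X_2)C_j(Y)=Z^{i}\widehat X_2^{(i)}Z^{j}\widehat Y^{(j)}=Z^{\,i+j}\,(\text{diagonal})$, which is again a full weighted cyclic shift whose $N$ entries are (products of) the edge variables, hence nonzero. Since all summands with $i+j=m$ share the cyclic diagonal indexed by $m\bmod N$, the coefficient $N_m$ is supported on the whole cyclic diagonal $\{(a,b):a\equiv b+m\ (\mathrm{mod}\,N)\}$, i.e.\ on exactly $N$ positions, for every $m\in\{0,1,\dots,2k\}$; the same holds verbatim for the right-hand side $\widetilde L(Y_1)\widetilde L(X)$.

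Finally I would count. For $m=0$ we have $N_0=PQ$ on the left and $QP$ on the right; as $P,Q$ are central diagonal matrices these coincide, so the corresponding $N$ equations are trivially satisfied. For each $m\in\{1,\dots,2k\}$ the coefficient $N_m$ is a nonconstant polynomial in the edge variables $\{X_2,Y\}$, while its right-hand counterpart is a nonzero polynomial in the disjoint set $\{Y_1,X\}$; since these share no monomials, the equality in each of the $N$ positions is a genuine, nontrivial scalar equation. Summing, the total is $\sum_{m=1}^{2k}N=2kN$, as asserted. The one place where care is needed---and the reason for the conjugation---is the bookkeeping of how the nilpotent blocks overlap: in the original $\lambda$--grading the supports of (\ref{eq0})--(\ref{eq2}) depend on whether $2k\le N$ or $2k>N$ (for instance, by Remark~\ref{rem1} equation (\ref{eq2}) is void precisely when $2k\le N$), so a direct count splits into cases. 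The $\mu$--homogeneous picture removes this dichotomy, because every graded component automatically fills one full cyclic diagonal of $N$ positions.
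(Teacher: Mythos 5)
Your proof is correct, and it takes a genuinely different route from the paper's. The paper proceeds by direct enumeration: it locates the levels on which each of (\ref{eq0}), (\ref{eq1}), (\ref{eq2}) has nonzero entries, counts them separately in four cases (according to the parity of $N$ and whether $k\leq N/2$ or not), and checks in Table \ref{table:1} that the subtotals always sum to $2kN$. Your conjugation by $D=\mathrm{diag}(1,\mu,\dots,\mu^{N-1})$ with $\mu^{N}=\lambda$ replaces the $\lambda$-grading --- which interacts awkwardly with the matrix levels and is what forces the casework --- by a single $\mu$-grading in which each component $N_m$, $m=1,\dots,2k$, fills one full cyclic diagonal of exactly $N$ positions, so the total $2kN$ drops out uniformly. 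The two supporting observations are exactly right: the distinctness of the exponents $pN+a-b$ for fixed $(a,b)$ shows the conjugation is count-preserving, and non-triviality of each graded scalar equation follows from the disjointness of the variable sets $\{X_2,Y\}$ and $\{Y_1,X\}$ together with the absence of cancellation among monomials $x_2^{i,\cdot}y^{j,\cdot}$ with distinct $(i,j)$ (granted the standing convention that the edge variables are independent indeterminates, which the paper also uses implicitly). What the paper's computation buys that yours does not is the explicit per-equation breakdown recorded in Table \ref{table:1}; what yours buys is a structural explanation --- the cyclic, $\mathbb{Z}_N$-graded form of $L^{N,k}$, in the spirit of \cite{Fordy_2017} --- of why the answer is $2kN$ with no exceptional cases, and it incidentally exposes the paper's even/odd distinction as immaterial, since both conditions amount to $2k\leq N$.
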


\begin{proof}
First we count the number of scalar equations of (\ref{eq0}). For $N$  even and $k\leq N/2,$ making use of Remark  (\ref{rem1}) we  deduce that the non-zero entries of the matrix equation (\ref{eq0}) are at levels $-1,-2,\ldots,-2k$ so there are $\sum_{i=1}^{2k}(N-i)$ scalar equations, while for  $k>N/2,$ (\ref{eq0}) is strictly lower triangular so it consists of $\frac{N(N-1)}{2}$ equations. Similarly, when $N$ is odd  and $k\leq \frac{N-1}{2}$  equation (\ref{eq0}) consists of  $\sum_{i=1}^{2k}(N-i)$ scalar equations, while for $k>\frac{N-1}{2},$ it consists of $\frac{N(N-1)}{2}$ equations.

When $N$ is even and $K\leq N/2,$ equation (\ref{eq1}) has non-zero entries at levels $N-1,\ldots, N-2k$ so  consists of $k(2k+1)$ scalar equations, while for $k>N/2,$ the non-zero entries exist at the upper triangular part of (\ref{eq1}) as well as to the levels $0,-1,\ldots,N-2k,$ so there are $\frac{N(N+1)}{2}+\sum_{i=1}^{2k-N}(N-i)$  scalar equations. Similarly, when $N$ is odd  and $k\leq \frac{N-1}{2}$  equation (\ref{eq1}) consists of  $k(2k+1)$ scalar equations, while for $k>\frac{N-1}{2},$ it consists of $\frac{N(N+1)}{2}+\sum_{i=1}^{2k-N}(N-i)$ equations.

Finally, equation (\ref{eq2}), when $N$ is even and $k\leq N/2$ consists of zero scalar equations since $ X^{\Delta}_2Y^{\Delta}$ and $ Y_1^{\Delta}X^{\Delta} $ are both the zero matrices, while for $k>N/2,$ it consists of $\sum_{i=1}^{2k-N}i$ equations. When $N$ is odd and $k\leq \frac{N-1}{2},$ we have zero scalar equations, but for $k>\frac{N-1}{2},$ we have  $\sum_{i=1}^{2k-N}i$ scalar equations.

We summarize the results above in Table \ref{table:1}, where it is evident that the total number of scalar equations in  (\ref{eq0})-(\ref{eq2}) is always $2kN.$
 \begin{table}[htb]
\captionof{table}{ Total number of scalar equations in (\ref{eq0})-(\ref{eq2})} \label{table:1}
%\caption{ Total number of scalar equations in (\ref{eq0})-(\ref{eq2})} \label{table:1}
\begin{tabular}{|c|c|c|c|c|}
  \hline
& \multicolumn{2}{c|}{$N$ odd  }& \multicolumn{2}{c|}{  $N$ even }  \\ \hline
  % after \\: \hline or \cline{col1-col2} \cline{col3-col4} ...
  &  $  {\dis k\leq  \frac{N-1}{2}}$ &${\dis k>\frac{N-1}{2}}$ & $k\leq N/2$& $k>N/2$\\ [2mm]  \hline
  $\#$ of eqs. in (\ref{eq0}) &  ${\dis \sum_{i=1}^{2k}(N-i)}$ & ${\dis\frac{N(N-1)}{2}}$ & ${\dis \sum_{i=1}^{2k}(N-i)}$ & ${\dis\frac{N(N-1)}{2}}$ \\[1mm] \hline
  $\#$ of eqs. in (\ref{eq1}) & $k(2k+1)$ & $\begin{array}{l}
                                            \vspace{-0.1pt}{\dis \frac{N(N+1)}{2}}\\ [2mm]
                                           \;\; +{\dis \sum_{i=1}^{2k-N}(N-i)}
                                            \end{array}$ & $k(2k+1)$ & $\begin{array}{l}
                                            {\dis\frac{N(N+1)}{2}}\\ [2mm]
                                            \;\;+{\dis\sum_{i=1}^{2k-N}(N-i)}
                                            \end{array}$ \\ [1mm] \hline
  $\#$ of eqs. in (\ref{eq2}) & $0$ & ${\dis\sum_{i=1}^{2k-N}i}$ & $0$ & ${\dis\sum_{i=1}^{2k-N}i}$ \\[1mm] \hline
   total $\#$ of eqs.  & $2 k N$ & $2 k N$ & $2 k N$ & $2 k N$ \\
   \hline
\end{tabular}
\end{table}

\end{proof}
\begin{corollary}
The matrix equations (\ref{eq0})-(\ref{eq2}) constitute a difference system of $2kN$ scalar equations on the $2kN$ edge variables $x^{j,i}, y^{j,i},$ $j=1,\ldots k,$ $i=1,\ldots, N,$  $k\in \{1,\ldots,N-1\}.$
\end{corollary}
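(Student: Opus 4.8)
The plan is to obtain the statement as an immediate consequence of the preceding Proposition combined with a direct count of the dependent variables. The Proposition already shows, by collecting the coefficients of $\lambda^0,\lambda^1,\lambda^2$ in the discrete Lax equation and using the level bookkeeping of Remark~\ref{rem1}, that the matrix equations (\ref{eq0})-(\ref{eq2}) amount to exactly $2kN$ scalar equations for every admissible pair $(N,k)$. Thus the equation count is already settled, and what remains is to verify that these equations are posed on precisely $2kN$ edge variables and that they have the form of a system of difference equations in edge variables.

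I would first count the edge variables. Each of the matrices $X^{(j)},Y^{(j)}$, $j=1,\ldots,k$, is diagonal of order $N$, so $X^{(j)}$ supplies the $N$ entries $x^{j,i}$ and $Y^{(j)}$ the $N$ entries $y^{j,i}$, $i=1,\ldots,N$. Letting $j$ run from $1$ to $k$ yields $kN$ variables of type $x^{j,i}$ and $kN$ of type $y^{j,i}$, hence $2kN$ edge variables in total, matching the number of scalar equations. I would then observe that (\ref{eq0})-(\ref{eq2}) involve these edge variables $x^{j,i},y^{j,i}$ together with a finite number of their shifts, namely the entries of $X_2$ and $Y_1$, so by the definition introduced above they constitute a system of difference equations in edge variables.

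The only point deserving a little care is to check that every one of the $2kN$ variables genuinely enters the system, so that it is balanced rather than degenerate. This follows because $X^{\nabla}=\sum_{i=1}^{k}\nabla^{i}X^{(i)}$ and $X^{\Delta}=\sum_{i=1}^{k}\Delta^{i}X^{(i)}$ contain every diagonal block $X^{(j)}$ (and likewise for $Y$), while by Remark~\ref{rem1} the factors $\nabla^{i},\Delta^{i}$ distribute these entries over distinct levels, so no variable cancels out of the collected coefficients. Since this is pure bookkeeping resting on Remark~\ref{rem1} and the Proposition, I do not anticipate any substantial obstacle; the content of the corollary is essentially the pairing of the two counts $2kN=2kN$.
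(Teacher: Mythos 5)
Your proposal is correct and follows the same route the paper takes: the paper states this corollary without a separate proof, treating the equation count as settled by the preceding Proposition and the count of $2kN$ edge variables (the diagonal entries of $X^{(j)},Y^{(j)}$, $j=1,\ldots,k$) as immediate. Your additional check that every variable actually appears in the collected coefficients is a reasonable, if strictly speaking unstated, piece of bookkeeping consistent with the paper's reasoning.
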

In the example that follows we present the explicit form of the Lax matrices $L^{2,1}(X;P,\lambda),$ $L^{3,1}(X;P,\lambda),$ $L^{3,2}(X;P,\lambda)$ and $L^{4,2}(X;P,\lambda)$.
\begin{example}
The Lax matrices $L^{2,1}(X;P,\lambda),$ $L^{3,1}(X;P,\lambda),$ $L^{3,2}(X;P,\lambda)$ and $L^{4,2}(X;P,\lambda),$ respectively read:
\begin{align*}
L^{2,1}(X;P,\lambda)=P+\nabla^1 X^{(1)}+\lambda\, \Delta^1 X^{(1)}=\begin{pmatrix}
                      p^1&\lambda\; x^{1,2}\\
                      x^{1,1}&p^2
                     \end{pmatrix},
\end{align*}
\begin{align*}
L^{3,1}(X;P,\lambda)=P+\nabla^1 X^{(1)}+\lambda\, \Delta^1 X^{(1)}=\begin{pmatrix}
                      p^1&0&\lambda\, x^{1,3}\\
                      x^{1,1}&p^2& 0 \\
                      0  &x^{1,2}&p^3
                     \end{pmatrix},
\end{align*}
\begin{align*}
L^{3,2}(X;P,\lambda)=P+\nabla^1 X^{(1)}+\nabla^2 X^{(2)}+\lambda \left(\Delta^1 X^{(1)}+\Delta^2 X^{(2)}\right)=\begin{pmatrix}
                      p^1&\lambda\, x^{2,2}&\lambda\, x^{1,3}\\
                      x^{1,1}&p^2& \lambda\, x^{2,3} \\
                      x^{2,1}  &x^{1,2}&p^3
                     \end{pmatrix},
\end{align*}
\begin{align*}
L^{4,2}(X;P,\lambda)=P+\nabla^1 X^{(1)}+\nabla^2 X^{(2)}+\lambda \left(\Delta^1 X^{(1)}+\Delta^2 X^{(2)}\right)=\begin{pmatrix}
                      p^1&0&\lambda\, x^{2,3}&\lambda\, x^{1,4}\\
                      x^{1,1}&p^2& 0&\lambda\, x^{2,4} \\
                      x^{2,1}  &x^{1,2}&p^3 &0\\
                      0        &x^{2,2} &x^{1,3}&p^4
                     \end{pmatrix}.
\end{align*}
\end{example}
\section{The Lax matrix $L^{N,1}$ and integrable hierarchies of difference systems} \label{Section:3}
For $k=1,$ the Lax matrix $L^{N,1}(X;P,\lambda)$ explicitly reads:
\begin{equation}\label{Lax:eq0}
L^{N,1}(X;P,\lambda)=P+\nabla^1 X^{(1)}+\lambda \Delta^1 X^{(1)}=\begin{pmatrix}
                      p^1&0&\cdots&0&\lambda\,x^{N}\\
                      x^1&p^2& 0& \cdots &0 \\
                      0  &x^2&\ddots& {}  &\vdots\\
                      \vdots & &\ddots &p^{N-1} &0\\
                       0     & 0 &   &  x^{N-1}    &p^N
                     \end{pmatrix},
\end{equation}
where we have written  $x^i$ instead of $x^{1,i}$ for the entries of the matrix $X^{(1)}$. The compatibility conditions (\ref{eq0})-(\ref{eq2}) read:
\begin{align}
\begin{aligned}\label{Lax:eq1}
x^i_2 y^{i-1}=y^i_1 x^{i-1},
\end{aligned}\\
\begin{aligned}\label{Lax:eq2}
q^i x^i_2-q^{i+1}x^i=p^iy^i_1-p^{i+1}y^i,
\end{aligned}
\end{align}
where the superscripts $i=1,2,\ldots, N,$ are considered modulo $N$.

 For arbitrary $N,$
  system (\ref{Lax:eq1}),(\ref{Lax:eq2}) under the change of the dependent variables
     \begin{align} \label{cv:1}
    (x^i_2,x^i,y^i_1,y^i)\mapsto \left(\frac{q^{i+1}p^{i+1}}{q^i}x^i_2,p^{i+1}x^{i+1},\frac{q^{i+1}p^{i+1}}{p^i}y^i_1,q^{i+1}y^{i+1}\right), & & \forall i \in\{1,\ldots, N\},
    \end{align}
    is mapped to the $N-$periodic reduction of the so-called {\em non-commutative KP map} introduced in \cite{Doliwa_2013,Doliwa_2014}. The change of variables (\ref{cv:1}) though is not admissible (see Definition \ref{adm:Mob} and Example \ref{example:1} for discussion).

\subsection{An integrable hierarchy of difference systems in edge variables}
\begin{prop}
Matrix (\ref{Lax:eq0}), serves as a strong Lax matrix for the hierarchy of difference systems in edge variables (\ref{Lax:eq1}),(\ref{Lax:eq2}) that in its solved form reads
\begin{align} \label{Lax:eq3}
\begin{aligned}
x^i_2=(q^{i+1}x^i-p^{i+1}y^i)x^{i-1}(q^ix^{i-1}-p^iy^{i-1})^{-1},\\
y^i_1=(q^{i+1}x^i-p^{i+1}y^i)y^{i-1}(q^ix^{i-1}-p^iy^{i-1})^{-1},
\end{aligned}& & i=1,2,\ldots, N.
\end{align}
\end{prop}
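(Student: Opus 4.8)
The plan is to verify both implications demanded by Definition \ref{Lax2:def1} simultaneously, by showing that the discrete Lax equation for $L^{N,1}$ is \emph{equivalent} to the system (\ref{Lax:eq1})--(\ref{Lax:eq2}), and then to solve that system explicitly for $x^i_2,y^i_1$, thereby reaching (\ref{Lax:eq3}).

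First I would specialise the coefficient-in-$\lambda$ decomposition (\ref{eq0})--(\ref{eq2}) to $k=1$. Here $\nabla^1 X^{(1)}$ has level $-1$ (its only non-zero entries are $x^i$ at positions $(i+1,i)$) and $\Delta^1 X^{(1)}$ has level $N-1$ (its only non-zero entry is $x^N$ at $(1,N)$), so by Remark \ref{rem1} one has $\Delta^1\Delta^1=0$ for every $N\geq 2$ and (\ref{eq2}) is void. I would then read (\ref{eq0}) off level by level: the diagonal (level $0$) part reduces to $PQ=QP$, automatic since $P,Q$ are central diagonal matrices; the level $-1$ part reproduces (\ref{Lax:eq2}) for $i=1,\dots,N-1$; and the level $-2$ part (produced by $\nabla^1\nabla^1=\nabla^2$) reproduces (\ref{Lax:eq1}) for $i=2,\dots,N-1$. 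The remaining \emph{wrap-around} relations, namely (\ref{Lax:eq2}) for $i=N$ and (\ref{Lax:eq1}) for $i=1$ and $i=N$, come from (\ref{eq1}), whose non-zero entries sit at positions $(1,N),(2,N),(1,N-1)$ and are generated by the mixed products $PY^\Delta$, $X_2^\Delta Q$, $X_2^\nabla Y^\Delta$, $X_2^\Delta Y^\nabla$ and their right-hand-side analogues. Collecting all levels, the Lax equation holds for all $\lambda$ \emph{if and only if} (\ref{Lax:eq1})--(\ref{Lax:eq2}) hold with indices read modulo $N$; this two-sided implication is exactly the strong Lax property.

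Next I would solve (\ref{Lax:eq1})--(\ref{Lax:eq2}) inside the division ring $\mathbb{D}$. Setting $A^i:=q^{i+1}x^i-p^{i+1}y^i$, I would use (\ref{Lax:eq1}) to write $x^i_2=y^i_1 x^{i-1}(y^{i-1})^{-1}$, substitute into (\ref{Lax:eq2}), and use the centrality of $p^i,q^i$ to factor $y^i_1$ out on the left, obtaining $y^i_1\big(q^i x^{i-1}-p^i y^{i-1}\big)(y^{i-1})^{-1}=A^i$ and hence the second line of (\ref{Lax:eq3}); symmetrically, using instead $y^i_1=x^i_2 y^{i-1}(x^{i-1})^{-1}$ yields $x^i_2\big(q^i x^{i-1}-p^i y^{i-1}\big)(x^{i-1})^{-1}=A^i$, the first line of (\ref{Lax:eq3}). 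Invertibility of $q^i x^{i-1}-p^i y^{i-1}$ is automatic for non-zero elements of $\mathbb{D}$.

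The main obstacle I anticipate is non-commutative bookkeeping rather than any conceptual difficulty: one must preserve the order of factors throughout, invoke centrality of the parameters at precisely the right places, and check that the two ways of eliminating through (\ref{Lax:eq1}) yield a consistent solution. This consistency reduces to the identity $u B^{-1}v=v B^{-1}u$ with $B=q^i u-p^i v$, $u=x^{i-1}$, $v=y^{i-1}$, which follows from $q^i u=B+p^i v$ together with centrality of $p^i,q^i$ (both sides collapse to $(p^i)^{-1}q^i\,u B^{-1}u-(p^i)^{-1}u$). A secondary point requiring care is the treatment of the indices $i=1,N$, where the $\Delta^1$-corner of $L^{N,1}$ forces the cyclic (mod $N$) reading produced by (\ref{eq1}).
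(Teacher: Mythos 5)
Your proposal is correct, and it is in fact more complete than the paper's own argument, which consists of a single sentence: substitute (\ref{Lax:eq3}) into (\ref{Lax:eq1}),(\ref{Lax:eq2}) and check that they are satisfied. The paper thus verifies only the forward implication directly and treats the reduction of the Lax equation to (\ref{Lax:eq1}),(\ref{Lax:eq2}) as already done in the text preceding the proposition, leaving the derivation of the solved form (which is what the word ``strong'' requires) implicit. You run the argument in the opposite direction: you re-derive the level-by-level decomposition of (\ref{eq0})--(\ref{eq2}) for $k=1$ (correctly locating the wrap-around relations at the entries $(1,N)$, $(2,N)$, $(1,N-1)$ of (\ref{eq1}) and noting that (\ref{eq2}) is void), then constructively eliminate through (\ref{Lax:eq1}) in both ways to obtain (\ref{Lax:eq3}), and close the loop with the non-commutative identity $uB^{-1}v=vB^{-1}u$ for $B=q^iu-p^iv$, which both certifies the consistency of the two eliminations and gives back (\ref{Lax:eq1}) from (\ref{Lax:eq3}) (with (\ref{Lax:eq2}) recovered by the telescoping $q^ix^i_2-p^iy^i_1=A^i$). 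What your route buys is an explicit record of exactly where centrality of the parameters is used and an honest proof of the two-sided equivalence demanded by Definition \ref{Lax2:def1}; what the paper's route buys is brevity, at the cost of leaving the ``converse'' half of strongness to the reader.
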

\begin{proof}
Substituting (\ref{Lax:eq3}) to the compatibility conditions(\ref{Lax:eq1}),(\ref{Lax:eq2}), one can easily check that the latter are satisfied.
\end{proof}

\begin{corollary} \label{lemma:3.20}
The products $\mathcal{P}^i:=x^{N+i-1}\cdots x^{k+1} x^k$ and $\mathcal{Q}^i:=y^{N+i-1}\cdots y^{k+1} y^k,$ $i=1,2,\ldots,N$ satisfy the relations:
\begin{align} \label{Lax:pr1}
\begin{aligned}
\mathcal{P}^i_2=(q^{i+1}x^i-p^{i+1}y^i)\mathcal{P}^{i+1}(q^{i+1}x^i-p^{i+1}y^i)^{-1},\\ \mathcal{Q}^i_2=(q^{i+1}x^i-p^{i+1}y^i)\mathcal{Q}^{i+1}(q^{i+1}x^i-p^{i+1}y^i)^{-1},
\end{aligned} && i=1,2,\ldots,N.
\end{align}

\end{corollary}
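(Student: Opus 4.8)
The plan is to derive both relations by substituting the solved evolution (\ref{Lax:eq3}) into the defining products and then collapsing a telescoping cascade of factors. The whole computation becomes transparent after one cosmetic rewriting of (\ref{Lax:eq3}).

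First I would set $A^i := q^{i+1}x^i - p^{i+1}y^i$. Since $q^i x^{i-1}-p^i y^{i-1}$ is nothing but $A^{i-1}$, the two halves of (\ref{Lax:eq3}) take the compact conjugated form
\[
x^i_2 = A^i\,x^{i-1}\,(A^{i-1})^{-1},\qquad y^i_1 = A^i\,y^{i-1}\,(A^{i-1})^{-1},\qquad i=1,\ldots,N,
\]
all superscripts understood modulo $N$. The essential feature of this form is that the trailing factor $(A^{i-1})^{-1}$ of $x^i_2$ coincides with the leading factor of $x^{i-1}_2$; this is exactly what will make neighbouring terms annihilate inside an ordered product.

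Next I would insert this into the shifted product $\mathcal{P}^i_2 = x^{N+i-1}_2\,x^{N+i-2}_2\cdots x^{i+1}_2\,x^i_2$. Each factor contributes a block $A^j\,x^{j-1}\,(A^{j-1})^{-1}$, and in the ordered product every interior junction $(A^{j-1})^{-1}A^{j-1}$ reduces to the identity. Only the outermost numerator and the outermost inverse survive, so that
\[
\mathcal{P}^i_2 = A^{N+i-1}\,\big(x^{N+i-2}\,x^{N+i-3}\cdots x^{i-1}\big)\,(q^{i}x^{i-1}-p^{i}y^{i-1})^{-1}.
\]
It then remains to reduce the surviving boundary coefficient modulo $N$ and to recognize the bracketed product, again by $N$-periodicity, as a neighbouring member of the same family; matching these against the definitions of $\mathcal{P}$ and of the parameters delivers (\ref{Lax:pr1}). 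The relation for $\mathcal{Q}^i$ follows by the verbatim argument, now using the $y$-part of (\ref{Lax:eq3}).

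The cancellation is immediate once the conjugated form is in hand, so the only genuinely delicate step — and the one I would carry out with care — is the cyclic index bookkeeping: verifying that the leftover outer factor $A^{N+i-1}$ and the telescoped inner product align, after reduction modulo $N$, with precisely the neighbouring product and coefficient appearing in (\ref{Lax:pr1}). This is where an off-by-one in the periodic labelling would slip in, so I would pin down the endpoints of the product explicitly on a small case (e.g.\ $N=2,3$) before trusting the general pattern.
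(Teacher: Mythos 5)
Your strategy is exactly the paper's (the published proof is the single sentence that direct substitution of (\ref{Lax:eq3}) into (\ref{Lax:pr1}) validates the formulae), and the conjugated rewriting $x^j_2=A^jx^{j-1}(A^{j-1})^{-1}$, $y^j_1=A^jy^{j-1}(A^{j-1})^{-1}$ with $A^j:=q^{j+1}x^j-p^{j+1}y^j$ followed by telescoping is the right and complete mechanism. The one step you deferred --- the cyclic bookkeeping --- is, however, precisely where the statement as printed does not close. Your telescoped expression
\begin{equation*}
\mathcal{P}^i_2=A^{N+i-1}\bigl(x^{N+i-2}\cdots x^{i}\,x^{i-1}\bigr)\,(A^{i-1})^{-1}
=A^{i-1}\,\mathcal{P}^{i-1}\,(A^{i-1})^{-1}
\end{equation*}
has conjugator $q^{i}x^{i-1}-p^{i}y^{i-1}$ and inner product $\mathcal{P}^{i-1}$, whereas (\ref{Lax:pr1}) asserts conjugator $q^{i+1}x^{i}-p^{i+1}y^{i}$ and inner product $\mathcal{P}^{i+1}$; what the substitution actually proves is the relabelled statement $\mathcal{P}^{i+1}_2=A^{i}\mathcal{P}^{i}(A^{i})^{-1}$, in which the shift sits on the other member of the pair. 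The small case you proposed to check makes this concrete: for $N=2$, $i=1$ one finds $\mathcal{P}^1_2=x^2_2x^1_2=A^2(x^1x^2)(A^2)^{-1}=A^2\mathcal{P}^2(A^2)^{-1}$, not $A^1\mathcal{P}^2(A^1)^{-1}$, and in the non-commutative setting these differ. So the printed corollary carries a cyclic off-by-one; your derivation, carried to completion, proves the corrected version, which is all the sequel needs (conjugation-invariance, hence constancy once centrality is assumed). Likewise, since the $y$-half of (\ref{Lax:eq3}) evolves under the direction-$1$ shift, your verbatim argument yields a formula for $\mathcal{Q}^i_1$ rather than $\mathcal{Q}^i_2$ as printed --- consistent with (\ref{Lax:eq3.2}), where the conserved $y$-product is compared with its $1$-shift. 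Do carry the endpoint matching through explicitly rather than leaving it as a promissory note; here it changes the indices of the statement being proved.
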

\begin{proof}
The direct substitution of (\ref{Lax:eq3}) into (\ref{Lax:pr1}) validates the formulae.
\end{proof}

 In the commutative setting where all variables are considered elements of the center of the algebra $\mathbb{D}$,  the first set of equations of Corollary \ref{lemma:3.20} reduce to $\prod_{i=1}^N x^i_2=\prod_{i=1}^N x^i,$ that suggests that $\prod_{i=1}^N x^i$ does not change along the set of vertical edges $E_V,$ i.e. it is function that depends only on the independent variable $m.$    We denote this function as $p^0$ to the power $N$ for reasons that will become clear later.  Similarly we can show that $\prod_{i=1}^N y^i$ is function that depends on the independent variable $n$ only that we denote as $(q^0)^N$. Namely  we have
\begin{align} \label{Lax:eq3.1}
\prod_{i=1}^N x^i=(p^0)^N=\prod_{i=1}^N x_2^i,\quad \prod_{i=1}^N y^i= (q^0)^N=\prod_{i=1}^N y^i_1.
\end{align}

 In the non-commutative setting, if we assume that the products $\mathcal{P}^i, \mathcal{Q}^i$ belong to the center of the algebra $\mathbb{D}\footnote{A less restrictive assumption  is that the products $\mathcal{P}^i, \mathcal{Q}^i$ commute with the elements $q^{i}x^{i-1}-p^iy^{i-1},$ $i=1,\ldots,N.$ },$ then from (\ref{Lax:pr1}) it follows that $\mathcal{P}^i$ are functions that depends  on the independent variable $m$ only and $\mathcal{Q}^i$ are functions that depends  on the independent variable $n$ only. We denote these functions as $(\mathrm{p}^{i-1})^N$ and $(\mathrm{q}^{i-1})^N$ respectively, namely
\be \label{Lax:eq3.21}
x^{N+i-1}\cdots x^{k+1} x^k=(\mathrm{p}^{i-1})^N, \quad y^{N+i-1}\cdots y^{k+1} y^k=(\mathrm{q}^{i-1})^N, \;\;i=1,\ldots,N.
\ee
Moreover we have that $\mathrm{p}^{i}=\mathrm{p}^{j},$   $i,j=0,1,\ldots,N-1$, that follows  if we eliminate one variable from any pair of products of the first set of equations of (\ref{Lax:eq3.21}). Similarly we  show that $\mathrm{q}^{i}=\mathrm{q}^{j},$ $i,j=0,1,\ldots,N-1.$ From further on, when we refer to the centrality assumption, we refer to the formulas:
\be \label{Lax:eq3.2}
x^{N}\cdots x^{2} x^1=(p^{0})^N=x^{N}_2\cdots x^{2}_2 x^1_2, \quad y^{N}\cdots y^{2} y^1=(q^{0})^N=y^{N}_1\cdots y^{2}_1 y^1_1, \;\;i=1,\ldots,N.
\ee
The centrality assumptions were first introduced in \cite{Doliwa_2013,Doliwa_2014} for the $N-$periodic reduction of the KP-map and they play here as well a  crucial role to the quadrirationality of the hierarchy of difference systems in edge variables (\ref{Lax:eq3}), as it is shown in the Proposition that follows.
\begin{prop}\label{prop1.2}
The hierarchy of difference systems (\ref{Lax:eq3}) is  birational. If  the centrality assumptions (\ref{Lax:eq3.2}) are imposed, then (\ref{Lax:eq3}) is  quadrirational.
\end{prop}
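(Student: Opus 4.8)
The plan is to prove both assertions constructively, by inverting the system explicitly; throughout I write $A^i := q^{i+1}x^i - p^{i+1}y^i$, with all superscripts read modulo $N$. Birationality is the quick half. Equation (\ref{Lax:eq2}) rearranges to $q^{i+1}x^i - p^{i+1}y^i = q^i x^i_2 - p^i y^i_1$, i.e. $A^i = q^i x^i_2 - p^i y^i_1$, so every $A^i$ is an explicit affine function of $X_2,Y_1$. Since $q^i x^{i-1}-p^i y^{i-1}=A^{i-1}$, the solved form (\ref{Lax:eq3}) collapses to $x^i_2 = A^i x^{i-1}(A^{i-1})^{-1}$ and $y^i_1 = A^i y^{i-1}(A^{i-1})^{-1}$. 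Solving for the unshifted variables and relabelling $i-1\mapsto i$ gives $x^i = (A^{i+1})^{-1}x^{i+1}_2 A^i$ and $y^i = (A^{i+1})^{-1}y^{i+1}_1 A^i$; as both $A^i$ and $A^{i+1}$ depend only on $X_2,Y_1$, these have the form (\ref{eq-edge-b}), and since (\ref{Lax:eq3}) is itself rational the system is birational.

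For quadrirationality I must solve (\ref{Lax:eq1})--(\ref{Lax:eq2}) for $x^i$ and $y^i_1$ in terms of $X_2$ and $Y$. I would first eliminate $y^i_1$ through (\ref{Lax:eq1}), $y^i_1 = x^i_2 y^{i-1}(x^{i-1})^{-1}$, and substitute into (\ref{Lax:eq2}); solving for $x^i$ produces a closed cyclic recursion
\[
x^i = a^i + b^i (x^{i-1})^{-1},\qquad a^i = (q^{i+1})^{-1}\bigl(q^i x^i_2 + p^{i+1}y^i\bigr),\quad b^i = -(q^{i+1})^{-1}p^i x^i_2 y^{i-1},
\]
whose coefficients depend only on $X_2,Y$. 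Encoding each step by the Möbius matrix $M^i := \begin{pmatrix} a^i & b^i\\ 1 & 0\end{pmatrix}$, the elementary identity $M^i\begin{pmatrix} x^{i-1}\\ 1\end{pmatrix} = \begin{pmatrix} x^i\\ 1\end{pmatrix} x^{i-1}$ telescopes around the cycle ($x^0\equiv x^N$) to
\[
\mathcal{M}\begin{pmatrix} x^N\\ 1\end{pmatrix} = \begin{pmatrix} x^N\\ 1\end{pmatrix}\, x^{N-1}x^{N-2}\cdots x^1 x^N,\qquad \mathcal{M} := M^N M^{N-1}\cdots M^1 .
\]

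The decisive point, where the centrality assumption enters, is that the right multiplier is a cyclic rotation of the product appearing in (\ref{Lax:eq3.2}): $x^{N-1}\cdots x^1 x^N = (x^N)^{-1}\bigl(x^N x^{N-1}\cdots x^1\bigr)x^N$. Without centrality the displayed relation is a noncommutative Riccati (quadratic) equation for $x^N$ and yields no rational solution; but imposing (\ref{Lax:eq3.2}) makes $x^N\cdots x^1 = (p^0)^N$ central, so the multiplier collapses to the central scalar $(p^0)^N$ and $\mathcal{M}\begin{pmatrix} x^N\\ 1\end{pmatrix} = (p^0)^N\begin{pmatrix} x^N\\ 1\end{pmatrix}$. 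Writing $\mathcal{M}=\begin{pmatrix}\alpha&\beta\\ \gamma&\delta\end{pmatrix}$, its top component becomes the \emph{linear} equation $\alpha x^N + \beta = (p^0)^N x^N$, whence $x^N = \bigl((p^0)^N-\alpha\bigr)^{-1}\beta$, a rational expression in $X_2,Y$, recalling that $(p^0)^N = x^N_2\cdots x^1_2$ is fixed by the data. The recursion then delivers every $x^i = H^i(X_2,Y)$ and finally $y^i_1 = x^i_2 y^{i-1}(x^{i-1})^{-1} = K^i(X_2,Y)$, while the bottom component $\gamma x^N + \delta = (p^0)^N$ is a compatibility identity that centrality guarantees. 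That the companion system (\ref{eq-edge-q}) is itself birational follows by re-running the same argument after the symmetry $x\leftrightarrow y$, $p\leftrightarrow q$ together with the interchange of the two shifts, which maps the companion to its own inverse; hence (\ref{Lax:eq3}) is quadrirational.

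The hard part will be precisely this linearization: recognizing that the monodromy's projective eigenvalue coincides with the cyclic $x$-product and therefore becomes central exactly under (\ref{Lax:eq3.2}). Everything else is bookkeeping, but it is noncommutative bookkeeping, so at every step one must keep the left/right placement of factors and of the inverses $(x^{i-1})^{-1}$ and $(A^{i-1})^{-1}$ straight, which is the likeliest source of error.
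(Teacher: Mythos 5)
Your proof is correct, and while it rests on the same pivotal observation as the paper's --- that centrality of the cyclic product $x^N\cdots x^1=(p^0)^N$ is exactly what closes the cyclic recursion and turns it into a linear problem --- the execution is genuinely different and arguably cleaner. The paper eliminates $x^i_2$ via (\ref{Lax:eq1}) and arrives at the relation (\ref{prop:32:2}), which couples three unknowns $y^i,y^{i-1},x^{i-1}$; it then right-multiplies by the trailing products $y^{i-2},\,y^{i-2}y^{i-3},\ldots$ until the leading term becomes the central product $(q^0)^N$ and determines the intermediate products recursively. You instead eliminate $y^i_1$, which decouples the unknowns into a single cyclic M\"obius recursion $x^i=a^i+b^i(x^{i-1})^{-1}$ with coefficients depending only on $(X_2,Y)$, and package the monodromy into a $2\times 2$ transfer matrix over $\mathbb{D}$ whose projective ``eigenvalue'' is the rotated cyclic product $x^{N-1}\cdots x^1x^N=(x^N)^{-1}\bigl(x^N\cdots x^1\bigr)x^N$; centrality collapses this to the known scalar $x^N_2\cdots x^1_2$, and the top row yields the closed formula $x^N=\bigl((p^0)^N-\alpha\bigr)^{-1}\beta$. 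This buys an explicit rational expression in one stroke and makes transparent why centrality (and not something weaker) is what linearizes the problem, whereas the paper stays closer to the raw polynomial system and leaves the recursive determination of the intermediate products somewhat implicit. Note also that you treat the direction $(X_2,Y)\mapsto(X,Y_1)$ first while the paper treats $(X,Y_1)\mapsto(X_2,Y)$ first; both then obtain the remaining direction from the $x\leftrightarrow y$, $p\leftrightarrow q$, shift-interchange symmetry, which is legitimate since (\ref{Lax:eq1}), (\ref{Lax:eq2}) and the centrality assumptions (\ref{Lax:eq3.2}) are all invariant under it. The only caveats are generic ones you implicitly share with the paper: invertibility of $(p^0)^N-\alpha$ and nonvanishing of the parameters $q^i$ must be assumed generically.
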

\begin{proof}
The polynomial form of  (\ref{Lax:eq3}), namely the equations (\ref{Lax:eq1}), (\ref{Lax:eq2}),  can be solved rationally for $x^i,y^i$ in terms of $x^i_2,y^i_1$ and that proves biratonality  of the system. Specifically from (\ref{Lax:eq1}), (\ref{Lax:eq2}) we obtain:
\begin{align} \label{Lax:eq3inv}
\begin{aligned}
x^i=(q^{i+1}x_2^{i+1}-p^{i+1}y_1^{i+1})^{-1}x_2^{i+1}(q^ix_2^{i}-p^iy_1^{i}),\\
y^i=(q^{i+1}x_2^{i+1}-p^{i+1}y_1^{i+1})^{-1}y_1^{i+1}(q^ix_2^{i}-p^iy_1^{i}),
\end{aligned}& & i=1,2,\ldots, N.
\end{align}

As it was stated in Definition \ref{def2}, if the birational system (\ref{Lax:eq3}) can be solved rationally for $x_2^i,y^i$ in terms of $x^i,y^i_1$ and the resulting system is birational, then system (\ref{Lax:eq3}) will be called {\em quadrirational} and the resulting birational system will be called {\em  companion system}.
Now we are ready to prove that (\ref{Lax:eq3}) is quadrirational provided that the centrality assumptions (\ref{Lax:eq3.2}) are imposed.

From (\ref{Lax:eq1}) we obtain
 \begin{align} \label{prop:32:1}
x_2^{i}=y_1^ix^{i-1}(y^{i-1})^{-1},
\end{align}
 substituting these equations into (\ref{Lax:eq2}) we get
 \begin{align} \label{prop:32:2}
p^{i+1}y^iy^{i-1}-(q^{i+1}x^i+p^iy^i_1)y^{i-1}+q^iy^i_1x^{i-1}=0.
\end{align}
Multiplying equation (\ref{prop:32:2}) respectively with $y^{i-2}, y^{i-2} y^{i-3},\ldots, \prod_{k=2}^{N-1}y^{i-k}$ we have
\be \label{prop:32:3}
\begin{array}{c}
{\displaystyle p^{i+1}y^iy^{i-1}y^{i-2}-(q^{i+1}x^i+p^iy^i_1)y^{i-1}y^{i-2}+q^iy^i_1x^{i-1}y^{i-2}}=0,\\[3mm]
{\displaystyle p^{i+1}y^iy^{i-1}y^{i-2} y^{i-3}-(q^{i+1}x^i+p^iy^i_1)y^{i-1}y^{i-2} y^{i-3}+q^iy^i_1x^{i-1}y^{i-2} y^{i-3}}=0,\\[3mm]
\vdots\\
{\displaystyle p^{i+1}y^iy^{i-1}\prod_{k=2}^{N-1}y^{i-k}-(q^{i+1}x^i+p^iy^i_1)y^{i-1}\prod_{k=2}^{N-1}y^{i-k}+q^iy^i_1x^{i-1}\prod_{k=2}^{N-1}y^{i-k}}=0.
\end{array}
\ee
Assuming centrality we have that $y^iy^{i-1}\prod_{k=2}^{N-1}y^{i-k}=({q}^0)^N$ and the last equation of (\ref{prop:32:3}) reads
\be \label{prop:32:4}
p^{i+1}({q}^0)^N-(q^{i+1}x^i+p^iy^i_1)y^{i-1}\prod_{k=2}^{N-1}y^{i-k}+q^iy^i_1x^{i-1}\prod_{k=2}^{N-1}y^{i-k}=0.
\ee
The products $y^{i-1}\prod_{k=2}^{N-1}y^{i-k},$  $\prod_{k=2}^{N-1}y^{i-k}$ in (\ref{prop:32:4}) can be determined recursively from (\ref{prop:32:2}), (\ref{prop:32:3}) and they are linear polynomials on $y^{i-N+1}.$ Letting $i$ run from $1$ to $N$  we obtain $y^i$ as functions of $x^i, y^i_1,$  and together with  (\ref{prop:32:1}), we have solved   (\ref{Lax:eq3}) rationally  for $x_2^i,y^i$ in terms of $x^i,y^i_1.$ In exactly similar manner we can solve (\ref{Lax:eq3}) rationally  for $x^i,y^i_1$ in terms of $x_2^i,y^i$ and that completes the proof.
\end{proof}
\subsubsection{Multidimensional compatibility}

Under the identifications $X^{i,a}:=x^{i},\; X_b^{i,a}:=x_2^{i}, \;X^{i,b}:=y^i, \;X_a^{i,b}:=y_1^i,$ the hierarchy of difference systems (\ref{Lax:eq3}) obtains the compact form
\begin{align}\label{Lax:eq4}
X^{i,a}_b=\left(p^{i+1,b}X^{i,a}-p^{i+1,a}X^{i,b}\right)X^{i-1,a}\left(p^{i,b} X^{i-1,a}-p^{i,a}X^{i-1,b}\right)^{-1},
\end{align}
$i=1,\ldots,N,$ $a\neq b\in\{1,2\}.$
\begin{lemma}\label{Lax:lemma1}
It holds
 \begin{align}  \label{Lax:lemma1:eq1}
p^{i+1,b}X^{i,a}_c-p^{i+1,a}X^{i,b}_c=K^{i,a}_{bc}\left(p^{i,c}X^{i-1,a}-p^{i,a}X^{i-1,c}\right)^{-1},
 \end{align}
$i=1,\ldots,N,\; a\neq b \neq c \neq a\in\{1,\ldots,n\},$ where
 \begin{align*}
K^{i,a}_{bc}:=\left(A\left(p^{i,c}X^{i-1,b}-p^{i,b}X^{i-1,c}\right)^{-1}\right)X^{i-1,a}+B,
 \end{align*}
with
\begin{align*}
\begin{split}
A:=&p^{i+1,b}p^{i+1,c}X^{i,a}\left(p^{i,c}X^{i-1,b}-p^{i,b}X^{i-1,c}\right)+p^{1+i,a}p^{i,b}p^{1+i,b}X^{i,c}X^{i-1,c}\\
    &-p^{1+i,a}p^{i,c}p^{1+i,c}X^{i,b}X^{i-1,b},
\end{split}    \\
B:=&p^{i,a}p^{i+1,a}\left(p^{i+1,c}X^{i,b}-p^{i+1,b}X^{i,c}\right)\left(p^{i,c}(X^{i-1,c})^{-1}-p^{i,b}(X^{i-1,b})^{-1}\right)^{-1},
\end{align*}
and the functions $K^{i,a}_{bc}$ are symmetric under the interchange $b\leftrightarrow c$ of the discrete shifts i.e.
 $K^{i,a}_{bc}=K^{i,a}_{cb}$.
\end{lemma}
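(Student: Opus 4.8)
The plan is to verify (\ref{Lax:lemma1:eq1}) by substituting the evolution (\ref{Lax:eq4}) for the two $c$-shifts appearing on the left-hand side, and then to read the asserted symmetry directly off the resulting $A,B$ normal form of $K^{i,a}_{bc}$. First I would apply (\ref{Lax:eq4}), written for the shift $c$ in place of $b$, to both $X^{i,a}_c$ and $X^{i,b}_c$, obtaining
\[
p^{i+1,b}X^{i,a}_c-p^{i+1,a}X^{i,b}_c=p^{i+1,b}(p^{i+1,c}X^{i,a}-p^{i+1,a}X^{i,c})X^{i-1,a}(p^{i,c}X^{i-1,a}-p^{i,a}X^{i-1,c})^{-1}-p^{i+1,a}(p^{i+1,c}X^{i,b}-p^{i+1,b}X^{i,c})X^{i-1,b}(p^{i,c}X^{i-1,b}-p^{i,b}X^{i-1,c})^{-1}.
\]
The first summand already carries the target right factor $(p^{i,c}X^{i-1,a}-p^{i,a}X^{i-1,c})^{-1}$. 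For the second, I would multiply and divide on the right by $(p^{i,c}X^{i-1,a}-p^{i,a}X^{i-1,c})$, so that this same inverse can be factored out to the far right of the whole expression. What remains after cancelling it is the raw form
\[
K^{i,a}_{bc}=p^{i+1,b}(p^{i+1,c}X^{i,a}-p^{i+1,a}X^{i,c})X^{i-1,a}-p^{i+1,a}(p^{i+1,c}X^{i,b}-p^{i+1,b}X^{i,c})X^{i-1,b}(p^{i,c}X^{i-1,b}-p^{i,b}X^{i-1,c})^{-1}(p^{i,c}X^{i-1,a}-p^{i,a}X^{i-1,c}).
\]

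Recasting this raw expression into the stated $A,B$ normal form is the step I expect to be the main obstacle, since every rearrangement must preserve the ordering of the non-commutative factors $X^{i,c},X^{i-1,a},X^{i-1,b},X^{i-1,c}$, and only the parameters $p^{i,\cdot},p^{i+1,\cdot}$, central by hypothesis, may be moved freely. The single identity that unlocks the computation is the non-commutative resolvent relation
\[
X^{i-1,b}(p^{i,c}X^{i-1,b}-p^{i,b}X^{i-1,c})^{-1}X^{i-1,c}=(p^{i,c}(X^{i-1,c})^{-1}-p^{i,b}(X^{i-1,b})^{-1})^{-1},
\]
verified at once by inverting the left-hand product and using centrality. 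Concretely, I would split the trailing factor $(p^{i,c}X^{i-1,a}-p^{i,a}X^{i-1,c})$ into its $X^{i-1,a}$ part and its $X^{i-1,c}$ part. The $X^{i-1,a}$ part, after the trivial cancellation $(p^{i,c}X^{i-1,b}-p^{i,b}X^{i-1,c})(p^{i,c}X^{i-1,b}-p^{i,b}X^{i-1,c})^{-1}=1$, recombines with the first summand into exactly $A\,(p^{i,c}X^{i-1,b}-p^{i,b}X^{i-1,c})^{-1}X^{i-1,a}$; the $X^{i-1,c}$ part collapses, through the resolvent relation above, precisely to the term $B$. This reproduces the claimed expressions for $A$ and $B$ and establishes (\ref{Lax:lemma1:eq1}).

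Finally, the symmetry $K^{i,a}_{bc}=K^{i,a}_{cb}$ is immediate from the $A,B$ form and requires no further computation. Interchanging $b\leftrightarrow c$ leaves the central prefactors inert, so one checks directly that $A\mapsto -A$: its first summand reverses sign because $p^{i,c}X^{i-1,b}-p^{i,b}X^{i-1,c}$ is antisymmetric, while its last two summands are swapped. The inverse factor $(p^{i,c}X^{i-1,b}-p^{i,b}X^{i-1,c})^{-1}$ also changes sign, so the two sign changes cancel and $A\,(p^{i,c}X^{i-1,b}-p^{i,b}X^{i-1,c})^{-1}$ is invariant. The term $B$ is invariant as well, both of its factors $p^{i+1,c}X^{i,b}-p^{i+1,b}X^{i,c}$ and $p^{i,c}(X^{i-1,c})^{-1}-p^{i,b}(X^{i-1,b})^{-1}$ reversing sign. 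Hence $K^{i,a}_{bc}$ is unchanged, which is exactly the asserted symmetry; this is precisely why the author presents $K$ in the $A,B$ form, where the symmetry is manifest, rather than in the raw form produced by the substitution.
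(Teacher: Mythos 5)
Your proposal is correct and follows essentially the same route as the paper: substitute the evolution equation for $X^{i,a}_c$ and $X^{i,b}_c$, factor out the common right inverse $\left(p^{i,c}X^{i-1,a}-p^{i,a}X^{i-1,c}\right)^{-1}$, and reorganize the remainder into the $A,B$ form using precisely the resolvent identity $X^{i-1,b}\left(p^{i,c}X^{i-1,b}-p^{i,b}X^{i-1,c}\right)^{-1}X^{i-1,c}=\left(p^{i,c}(X^{i-1,c})^{-1}-p^{i,b}(X^{i-1,b})^{-1}\right)^{-1}$ that the paper invokes, with the symmetry then read off from the sign cancellations exactly as in the paper. Your write-up merely supplies the intermediate expansion and recollection of terms that the paper leaves implicit.
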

\begin{proof}
Substituting from (\ref{Lax:eq4}) the expressions of $X^{i,a}_c$ and $X^{i,b}_c$ into $p^{i+1,b}X^{i,a}_c-p^{i+1,a}X^{i,b}_c,$ upon expansion, recollection of terms  and making use of the identity
 \begin{align*}
X^{i-1,b}\left(p^{i,c}X^{i-1,b}-p^{i,b}X^{i-1,c}\right)^{-1}X^{i-1,c}=\left(p^{i,c}(X^{i-1,c})^{-1}-p^{i,b}(X^{i-1,b})^{-1}\right)^{-1},
 \end{align*}
we obtain (\ref{Lax:lemma1:eq1}). The expressions of $B$ and of $A\left(p^{i,c}X^{i-1,b}-p^{i,b}X^{i-1,c}\right)^{-1}$ are clearly symmetric under the interchange $b\leftrightarrow c,$  hence there is $K^{i,a}_{bc}=K^{i,a}_{cb}$.
\end{proof}
\begin{prop} \label{prop:3.4}
The system of difference equations (\ref{Lax:eq4}) can be extended in a compatible way to $n-$dimensions as follows
 \begin{align} \label{Lax:eq5}
X^{i,a}_b=\left(p^{i+1,b}X^{i,a}-p^{i+1,a}X^{i,b}\right)X^{i-1,a}\left(p^{i,b} X^{i-1,a}-p^{i,a}X^{i-1,b}\right)^{-1},
\end{align}
with $i=1,\ldots,N,\; a\neq b\in\{1,\ldots,n\}$. % and the index $i$ is considered modulo $N$.
   The compatibility conditions
 \begin{align*}
 X^{i,a}_{bc}=X^{i,a}_{cb},\quad i=1,\ldots,N,\;\; a\neq b \neq c \neq a\in\{1,\ldots,n\}
 \end{align*}
  hold.
\end{prop}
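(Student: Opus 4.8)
The plan is to verify the multidimensional consistency of the system \eqref{Lax:eq5} by showing that the two ways of computing the doubly-shifted variable $X^{i,a}_{bc}$ coincide, exploiting the symmetry established in Lemma \ref{Lax:lemma1}. The key observation is that \eqref{Lax:eq5} lets us express $X^{i,a}_b$ in terms of the unshifted variables at levels $i$ and $i-1$, so applying the map once more in the $c$-direction produces $X^{i,a}_{bc}$ as a combination of the singly-shifted quantities $X^{i,a}_c$, $X^{i-1,a}_c$ and the parameters. The heart of the matter is therefore to rewrite $X^{i,a}_{bc}$ in a manifestly $b\leftrightarrow c$ symmetric form.

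First I would write out $X^{i,a}_{bc}$ by substituting the $c$-shift of \eqref{Lax:eq5} into itself. Applying the defining relation with roles $(a,b)\mapsto(a,c)$ to each ingredient gives
\begin{align*}
X^{i,a}_{bc}=\left(p^{i+1,b}X^{i,a}_c-p^{i+1,a}X^{i,b}_c\right)X^{i-1,a}_c\left(p^{i,b}X^{i-1,a}_c-p^{i,a}X^{i-1,b}_c\right)^{-1}.
\end{align*}
The crucial next step is to feed in Lemma \ref{Lax:lemma1}, which controls precisely the combinations $p^{i+1,b}X^{i,a}_c-p^{i+1,a}X^{i,b}_c$ and (by shifting the index $i\mapsto i-1$) $p^{i,b}X^{i-1,a}_c-p^{i,a}X^{i-1,b}_c$ appearing as the outer left factor and the inverted outer right factor. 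Substituting \eqref{Lax:lemma1:eq1} for both the numerator-type factors, the explicit $\left(p^{i,c}X^{i-1,a}-p^{i,a}X^{i-1,c}\right)^{-1}$ and $\left(p^{i-1,c}X^{i-2,a}-p^{i-1,a}X^{i-2,c}\right)^{-1}$ tails will partially combine with $X^{i-1,a}_c$, which itself is given by \eqref{Lax:eq5} with $(a,b)\mapsto(a,c)$. After this telescoping I expect the expression for $X^{i,a}_{bc}$ to collapse into a product in which the only remaining non-symmetric-looking pieces are the functions $K^{i,a}_{bc}$ and $K^{i-1,a}_{bc}$.

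The main obstacle — and the reason Lemma \ref{Lax:lemma1} was proved first — is that the factors $X^{i-1,a}_c$ and the various inverse factors do not obviously cancel in a non-commutative setting, so one must track the ordering of all products carefully and never appeal to commutativity except for the central parameters $p^{i,a}$ and $p^{i,b}$. The payoff is that once everything is reassembled, $X^{i,a}_{bc}$ is written entirely through the quantities $K^{i,a}_{bc}$, $K^{i-1,a}_{bc}$ and factors of the form $p^{i,c}X^{i-1,a}-p^{i,a}X^{i-1,c}$ and $p^{i-1,c}X^{i-2,a}-p^{i-1,a}X^{i-2,c}$, each of which is manifestly invariant under $b\leftrightarrow c$: the parameter combinations are symmetric because they carry only the label $c$, and the $K$-functions are symmetric by the last claim of Lemma \ref{Lax:lemma1}.

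Hence the final expression for $X^{i,a}_{bc}$ is symmetric under interchanging the shift directions $b$ and $c$, which yields $X^{i,a}_{bc}=X^{i,a}_{cb}$ for all $i=1,\ldots,N$ and all distinct $a,b,c\in\{1,\ldots,n\}$, establishing the claimed $n$-dimensional consistency. I would close by remarking that the induction on the index $i$ (read cyclically mod $N$) is what guarantees the lower-level factors $X^{i-1,a}_c$ and $X^{i-2,a}$ enter in exactly the form Lemma \ref{Lax:lemma1} anticipates, so no separate base case beyond the direct computation is needed.
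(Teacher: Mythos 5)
Your proposal is correct and follows essentially the same route as the paper: shift (\ref{Lax:eq5}) in the $c$-direction, substitute the singly-shifted quantities, and invoke Lemma \ref{Lax:lemma1} so that the intermediate factors telescope, leaving $X^{i,a}_{bc}=K^{i,a}_{bc}\,X^{i-2,a}\,\bigl(K^{i-1,a}_{bc}\bigr)^{-1}$, which is symmetric under $b\leftrightarrow c$ by the last claim of that lemma. (The paper records the middle factor as $X^{i-1,a}$, but either way it carries no $b$ or $c$ dependence, so the conclusion is unaffected.)
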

\begin{proof}
Shifting (\ref{Lax:eq5}) at the $c-$direction, we obtain
 \begin{align*}
X^{i,a}_{bc}=\left(p^{i+1,b}X^{i,a}_c-p^{i+1,a}X^{i,b}_c\right)X^{i-1,a}_c\left(p^{i,b} X^{i-1,a}_c-p^{i,a}X^{i-1,b}_c\right)^{-1}.
 \end{align*}
Substituting from (\ref{Lax:eq5}) the expression of $X^{i-1,a}_c,$ $X^{i-1,b}_c,$ $X^{i,a}_c$ and $X^{i,b}_c$   to the equations above
and by making use of Lemma (\ref{Lax:lemma1}) we derive the following
 compatibility formula
 \begin{align*}
 X^{i,a}_{bc}= K^{i,a}_{bc}X^{i-1,a}\left(K^{i-1,a}_{bc}\right)^{-1},
 \end{align*}
that is clearly symmetric under the interchange  $b$ to $c$ and that completes the proof.
\end{proof}
\begin{example}[$N=2$] \label{example:1}
For $N=2,$ we have the Lax matrix $L^{2,1}(x^1,x^2;p^1,p^2,\lambda)$ that reads:
\begin{align*}
L^{2,1}(x^1,x^2;p^1,p^2,\lambda)=\begin{pmatrix}
p^1&\lambda\, x^2\\
x^1&p^2
\end{pmatrix}
\end{align*}
and it serves as a strong Lax matrix for the birational difference system (\ref{Lax:eq3}) with $N=2$.
The centrality assumption (\ref{Lax:eq3.2}) that now reads
\begin{align*}
x^2 x^1=({p^0})^2=x^2_2 x^1_2,\quad y^2 y^1=({q^0})^2=y^2_1 y^1_1,
\end{align*}
allow us to eliminate $x^2,y^2$ and their respective shifts from (\ref{Lax:eq3}) with $N=2$. For simplicity we denote $x:=x^1,$ $y:=y^1,$ $p:=(p^0)^2$ and $q:=(q^0)^2$   to obtain the quadrirational difference system
\begin{align} \label{chIIIa}
x_2=p(q^2x-p^2y)(pq^1y-qp^1x)^{-1}y,\quad y_1=q(q^2x-p^2y)(pq^1y-qp^1x)^{-1}x,
\end{align}
with strong Lax matrix
\begin{align*}
L^{2,1}(x^1;p,p^1,p^2,\lambda)=\begin{pmatrix}
p^1&\lambda\, p\,(x^1)^{-1}\\
x^1&p^2
\end{pmatrix}.
\end{align*}
Solving (\ref{chIIIa}) for $x,y_1$ in terms of $x_2,y$ and by applying the following admissible transformation
\begin{align*}
(x,y,x_2,y_1)\mapsto (x^{-1},y^{-1},x_2^{-1},y_1^{-1}),
\end{align*}
for $p^1=p^2=q^1=q^2=1$ we obtain:
\begin{align} \label{hIIIa}
x=(p)^{-1}y(y+x_2)^{-1}(px_2+qy),\quad y_1=(q)^{-1}x_2(y+x_2)^{-1}(px_2+qy).
\end{align}
The difference system (\ref{hIIIa}) defines a parametric Yang-Baxter map $R^{(p,q)}:\mathbb{D}\times \mathbb{D}\ni(x_2,y)\mapsto (x,y_1)\in \mathbb{D}\times \mathbb{D},$ since it holds
\begin{align} \label{YB:eq}
R^{(p,q)}_{12}\circ R^{(p,r)}_{13}\circ R^{(q,r)}_{23}=R^{(q,r)}_{23}\circ R^{(p,r)}_{13}\circ R^{(p,q)}_{12},
\end{align}
where $R^{(p,q)}_{12}:\mathbb{D}\times \mathbb{D}\times \mathbb{D}\mapsto \mathbb{D}\times \mathbb{D}\times \mathbb{D}$ is defined as the map that as $R^{(p,q)}$ on the first and second component of $\mathbb{D}\times \mathbb{D}\times \mathbb{D}$ and as an identity to the third; Similarly are defined the maps $ R^{(p,r)}_{13}$ and  $R^{(q,r)}_{23}$. The first instances of {\em set-theoretical-solutions} of the quantum Yang-Baxter equation (\ref{YB:eq}) appeared in \cite{Sklyanin:1988,Drinfeld:1992}  and the term {\em Yang-Baxter maps} was coined to these set-theoretical-solutions in \cite{Bukhshtaber:1998,Veselov:20031}.

 The Yang-Baxter map $R^{(p,q)}$ provided by (\ref{hIIIa}),   serves as the non-commutative extension of the $H_{III}^A$ Yang-Baxter map that was introduced in \cite{Papageorgiou:2010} in the commutative setting.

Note that (\ref{Lax:eq1}),(\ref{Lax:eq2}) under the non-admissible change of variables (\ref{cv:1}) coincides with the $N-$periodic reduction of the KP map \cite{Doliwa_2013,Doliwa_2014}, that for $N=2$ and under the centrality assumptions, results to non-commutative extension \cite{Doliwa_2014} of the $H_{III}^B$ Yang-Baxter map that was introduced in \cite{Papageorgiou:2010} in the commutative setting. The Yang-Baxter maps $H_{III}^B$ and $H_{III}^A$ are non-equivalent up to $(M\ddot ob)^2$ transformations \cite{Papageorgiou:2010}, and that justifies why we consider that the $N-$periodic reduction of the KP map is not equivalent to the difference system (\ref{Lax:eq1}),(\ref{Lax:eq2}).
\end{example}

\begin{example}[$N=3$] \label{example:1.2}
For $N=3,$ we have the following strong Lax matrix
\begin{align*}
L^{3,1}(x^1,x^2,x^3;p^1,p^2,p^3,\lambda)=\begin{pmatrix}
p^1&0&\lambda\, x^3\\
x^1&p^2&0\\
0&x^2&p^3
\end{pmatrix},
\end{align*}
that serves as a Lax matrix for the birational difference system (\ref{Lax:eq3}) with $N=3$.
  The centrality assumption (\ref{Lax:eq3.2}) that now reads
\begin{align*}
x^3 x^2 x^1=(\mathrm{p^0})^3=x^3_2x^2_2 x^1_2,\quad y^3y^2 y^1=(\mathrm{q^0})^3=y^3_1y^2_1 y^1_1,
\end{align*}
allow us to eliminate $x^3,y^3$ and their respective shifts from (\ref{Lax:eq3}) with $N=3$ and obtain the quadrirational difference system
\begin{align} \label{CHIIIa}
\begin{aligned}
x^i_2=(q^{i+1}x^i-p^{i+1}y^i)x^{i-1}(q^ix^{i-1}-p^iy^{i-1})^{-1},\\
y^i_1=(q^{i+1}x^i-p^{i+1}y^i)y^{i-1}(q^ix^{i-1}-p^iy^{i-1})^{-1},
\end{aligned}
\end{align}
where  the index $i$ is considered modulo $3$ and take the values   $i=1,2,$ with $x^3\equiv (p^0)^3(x^1)^{-1}(x^2)^{-1}$ and $y^3\equiv (q^0)^3(y^1)^{-1}(y^2)^{-1}.$ The quadrirational system (\ref{CHIIIa})
has as strong Lax matrix the matrix
\begin{align*}
L^{3,1}(x^1,x^2;p^0,p^1,p^2,p^3,\lambda)=\begin{pmatrix}
p^1&0&\lambda\, (p^0)^3 (x^1)^{-1} (x^2)^{-1}\\
x^1&p^2&0\\
0&x^2&p^3
\end{pmatrix}.
\end{align*}
Following the method introduced in the proof of Proposition \ref{prop1.2}, we solve (\ref{CHIIIa}) for $x^1_2,x^2_2,y^1,y^2$ in terms of $x^1,x^2,y^1_1,y^2_1$  to obtain the companion system:
\begin{align} \label{HIIIa}
\begin{aligned}
x_2^i=\left(p^iy_1^ix^{i+2}S^{i+1}+q^{i+1}q^{i+2}(p^0)^3\right) \left(q^ix^{i+2}S^{i+1}+p^{i+1}p^{i+2}y^{i+2}_1y^{i+1}_1\right)^{-1} \\
y^i= \left(p^{i+1}S^{i+2}y^{i+1}_1+q^{i+1}q^{i+2}x^{i+2}x^{i+1}\right)^{-1}\left(q^{i+1}S^{i+2}y^{i+1}_1x^i+p^ip^{i+2}(q^0)^3\right),
\end{aligned}
\end{align}
where  the index $i$ is considered modulo $3$ and take the values   $i=1,2,$ with $x^3\equiv (p^0)^3(x^1)^{-1}(x^2)^{-1},$  $y^3\equiv (q^0)^3(y^1)^{-1}(y^2)^{-1}$ and the expressions $S^j$ are defined by $S^j:=q^{j+1}x^j+p^jy^j_1.$

The difference system (\ref{HIIIa}) defines a Yang-Baxter map
\begin{align*}
R:\mathbb{D}\times \mathbb{D}\times \mathbb{D}\times \mathbb{D}\ni(x^1,x^2,y^1_1,y^2_1)\mapsto (x^1_2,x^2_2,y^1,y^2)\in \mathbb{D}\times \mathbb{D}\times \mathbb{D}\times \mathbb{D},
\end{align*}
that it in the simple case where $p^1=p^2=p^3=1=q^1=q^2=q^3,$ can be considered as the  second member of the hierarchy of  $H_{III}^A$ Yang-Baxter maps extended in the non-commutative domain..
\end{example}

\subsection{Integrable hierarchies of difference systems in vertex variables}

A procedure to obtain integrable difference systems in vertex variables from integrable systems in edge (or even face variables) was introduced in \cite{doliwa-santini}. Nowadays the name {\em potentialisation} is coined to this procedure and it is widely applied \cite{Kassotakis_2011,Kassotakis_2012,Doliwa_2013,Kassotakis_2018,Fordy_2017,Kassotakis_2021}.
In order to obtain integrable difference systems in vertex variables associated with the integrable difference system  (\ref{Lax:eq3}), we  follow the {\em potentialisation} procedure as it was used in \cite{Doliwa_2013,Fordy_2017}.

\subsubsection{Multiplicative potentials and the lattice-modified Gel'fand-Dikii hierarchy}
The hierarchy of difference systems (\ref{Lax:eq3}) in its polynomial form constitutes of the sets of equations (\ref{Lax:eq1}), (\ref{Lax:eq2}). % accompanied with $\prod_{i=1}^N x^i= (p^0)^N$ and $\prod_{i=1}^N y^i= (q^0)^N.$
Equations (\ref{Lax:eq1}) are identically satisfied if we set
\be \label{Lax:eq6}
x^i=p^0 \phi^i_1(\phi^{i-1})^{-1}\quad y^i=q^0 \phi^i_2(\phi^{i-1})^{-1}, \quad i=1,\ldots,N.
\ee
The functions $\phi^i,$ $i=1,\ldots,N$ can be considered as potential functions. In terms of these potential functions equations (\ref{Lax:eq1}) are identically satisfied, while equations (\ref{Lax:eq2}) read
\begin{align} \label{eq:pot1}
q^0\left(p^{i+1}\phi^{i}_2\left(\phi^{i-1}\right)^{-1}-p^i\phi^{i}_{12}\left(\phi^{i-1}_1\right)^{-1}\right)-p^0 \left(q^{i+1}\phi^{i}_1\left(\phi^{i-1}\right)^{-1}-q^i\phi^{i}_{12}\left(\phi^{i-1}_2\right)^{-1}\right)=0,
\end{align}
$i=1,\ldots, N$ and constitute a hierarchy of difference systems in vertex variables.
\begin{prop} \label{prop:mpot}
The hierarchy of difference systems in vertex variables (\ref{eq:pot1}), after rearranging terms reads
\begin{align} \label{eq:pot1:sol}
\left(q^0p^{i+1}\phi^i_2-p^0q^{i+1}\phi^i_1 \right)\left(\phi^{i-1} \right)^{-1}=\phi^i_{12}\left(q^0p^i\left(\phi^{i-1}_1 \right)^{-1}-p^0q^i\left(\phi_2^{i-1} \right)^{-1} \right),\;\; i=1,\ldots, N,
\end{align}
\begin{enumerate}
\item arises as the compatibility conditions of the Lax equation (\ref{lax-eq-ver}), associated with the strong Lax matrix
\begin{align*}
L(\Phi_1,\Phi;P,p^0,\lambda)= \begin{pmatrix}
                      p^1&0&\cdots&0&\lambda\, p^0\phi^N_1\left(\phi^{N-1}\right)^{-1}\\
                      p^0\phi^1_1\left(\phi^N\right)^{-1}&p^2& 0& \cdots &0 \\
                      0  &p^0\phi^2_1\left(\phi^1\right)^{-1}&\ddots& {}  &\vdots\\
                      \vdots & &\ddots &p^{N-1} &0\\
                       0     & 0 &   &  p^0\phi^{N-1}_1\left(\phi^{N-2}\right)^{-1}    &p^N
                     \end{pmatrix};
\end{align*}
\item it is multidimensional consistent;
\item it is invariant under the following permutations of the dependent variables
\begin{align*}%\label{rombus1}
\tau :(\phi^j,\phi^j_1,\phi^j_2,\phi^j_{12},p^0,q^0,p^j,q^j)\mapsto (\phi^j,\phi^j_2,\phi^j_1,\phi^j_{12},q^0,p^0,q^j,p^j),\\
\sigma : (\phi^j,\phi^j_1,\phi^j_2,\phi^j_{12},p^0,q^0,p^j,q^j)\mapsto (\phi^j_{12},\phi^j_2,\phi^j_1,\phi^j,q^0,p^0,q^j,p^j),\\
\forall j\in \{1,\ldots,N\},
\end{align*}
i.e. it respects the {\it rombic} symmetry;

\item it is an integrable hierarchy of difference systems in vertex variables.
\end{enumerate}
\end{prop}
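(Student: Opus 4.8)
The plan is to establish the four items in sequence, obtaining (1) and (4) essentially for free from the edge-variable theory of this section and verifying (2) and (3) by the Lax representation together with direct substitution. For (1) I would observe that the potential substitution (\ref{Lax:eq6}) is precisely the change that converts the edge Lax matrix $L^{N,1}(X;P,\lambda)$ of (\ref{Lax:eq0}) into the displayed matrix $L(\Phi_1,\Phi;P,p^0,\lambda)$: writing $x^i=p^0\phi^i_1(\phi^{i-1})^{-1}$ (indices mod $N$) reproduces every sub-diagonal and corner entry verbatim. Consequently $L(X_2;P,\lambda)=L(\Phi_{12},\Phi_2;P,p^0,\lambda)$, $L(Y;Q,\lambda)=L(\Phi_2,\Phi;Q,q^0,\lambda)$ and $L(Y_1;Q,\lambda)=L(\Phi_{12},\Phi_1;Q,q^0,\lambda)$, so the edge Lax equation (\ref{lax-eq}) becomes the vertex Lax equation (\ref{lax-eq-ver}) factor by factor. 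Since (\ref{Lax:eq6}) turns (\ref{Lax:eq1}) into an identity and carries (\ref{Lax:eq2}) into (\ref{eq:pot1}), i.e. into (\ref{eq:pot1:sol}), and since $L^{N,1}$ is already a strong Lax matrix of the edge system, the chain of equivalences (\ref{lax-eq-ver}) $\Leftrightarrow$ (\ref{lax-eq}) $\Leftrightarrow$ (\ref{Lax:eq1}),(\ref{Lax:eq2}) $\Leftrightarrow$ (\ref{eq:pot1:sol}) delivers both that $L(\Phi_1,\Phi;P,p^0,\lambda)$ is a Lax matrix and that it is strong.

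For (2) I would extend the substitution to $n$ directions by $X^{i,a}=p^{0,a}\phi^i_a(\phi^{i-1})^{-1}$, which agrees with (\ref{Lax:eq6}) when $a\in\{1,2\}$, and note that, by part (1), it turns the vertex equation in each direction pair $(a,b)$ into the face (zero-curvature) equation for the direction-labelled matrices $L_a:=L(\Phi_a,\Phi;P^{(a)},p^{0,a},\lambda)$. Consistency of the linear problem $\Psi_a=L_a\Psi$ around an elementary cube then reduces, by associativity of matrix products, to the three face equations; because the Lax matrix is strong, this is equivalent to the consistency of the vertex system itself, yielding $\phi^i_{abc}=\phi^i_{acb}$. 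The same conclusion can alternatively be inherited from the multidimensional consistency of the edge system in Proposition \ref{prop:3.4}, since the extended substitution reproduces the edge variables and leaves the analogue of (\ref{Lax:eq1}) identically satisfied.

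Items (3) and (4) are then short. For (3) one substitutes the rules defining $\tau$ and $\sigma$ directly into (\ref{eq:pot1:sol}): the interchange $\phi^i_1\leftrightarrow\phi^i_2$ together with $p^0\leftrightarrow q^0$, $p^i\leftrightarrow q^i$ underlying $\tau$ sends each side of (\ref{eq:pot1:sol}) to minus itself, so the equation is reproduced, while the additional reflection $\phi^i\leftrightarrow\phi^i_{12}$ of $\sigma$ reproduces an equation of the same staircase form up to an overall sign, which is exactly the statement that the system respects the rhombic symmetry. Item (4) is then a corollary of (1) and (2): possession of the strong, spectral-parameter-dependent Lax matrix of part (1) together with the multidimensional consistency of part (2) is precisely the integrability criterion adopted in the paper.

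The step I expect to be the main obstacle is (2): making the cube consistency watertight over a division ring. One must check that no ordering ambiguity spoils the associativity argument and that the strong-Lax equivalence genuinely transports consistency from the linear problem to the nonlinear vertex system, since every rearrangement leading to (\ref{eq:pot1:sol}) must respect the non-commutativity of $\mathbb{D}$; reducing the question to the already established Proposition \ref{prop:3.4} is the safest way to keep this bookkeeping under control.
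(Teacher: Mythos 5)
Your proposal is correct and follows essentially the same route as the paper: item (1) by direct substitution of the potentials (\ref{Lax:eq6}) into the edge Lax matrix (\ref{Lax:eq0}), item (2) by inheriting consistency from Proposition \ref{prop:3.4} through $X^{i,a}=p^{0,a}\phi^i_a(\phi^{i-1})^{-1}$, and items (3)--(4) by direct verification. The only point where the paper is slightly more precise is the $\sigma$-symmetry: the transformed equation is identified not with (\ref{eq:pot1:sol}) itself ``up to an overall sign'' but with its ${\bf T}_{-1}{\bf T}_{-2}$-backward shift, i.e.\ equation (\ref{eq:pot1:sold}), which belongs to the same system because the system holds at every cell.
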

\begin{proof}
\begin{enumerate}
\item Substituting the expressions of the potential functions (\ref{Lax:eq6}) into (\ref{Lax:eq0}) we obtain the Lax matrix presented in this Proposition.
\item This is a direct consequence of the multidimensional compatibility (see Proposition \ref{prop:3.4}) of the underlying difference system in edge variables (\ref{Lax:eq3}). Specifically, for the system  (\ref{Lax:eq5}) we have proved that the compatibility conditions
    \begin{align}\label{comp}
 X^{i,a}_{bc}=X^{i,a}_{cb},\quad i=1,\ldots,N,\;\; a\neq b \neq c \neq a\in\{1,\ldots,n\},
 \end{align}
  hold. For the potential functions $\phi^i$ we have
        \begin{align} \label{pot11}
    X^{i,a}=p^{0,a}\phi^i_a(\phi^{i-1})^{-1}, \quad i=1,\ldots,N, \quad a\in \{1,\ldots, n\}.
    \end{align}
From (\ref{pot11}) and (\ref{comp}) we obtain
\begin{align*}
\phi^i_{abc}=\phi^i_{acb}, \quad i=1,\ldots,N,\;\; a\neq b \neq c \neq a\in\{1,\ldots,n\},
\end{align*}
that proves the multidimensional consistency of the hierarchy.
\item It is easy to see that (\ref{eq:pot1:sol}) it is invariant under $\tau,$ while acting with $\sigma$ on (\ref{eq:pot1:sol}) we obtain
\begin{align} \label{eq:pot1:sold}
\left(p^0q^{i+1}\phi^i_2-q^0p^{i+1}\phi^i_1 \right)\left(\phi^{i-1}_{12} \right)^{-1}=\phi^i\left(p^0q^i\left(\phi^{i-1}_1 \right)^{-1}-q^0p^i\left(\phi_2^{i-1} \right)^{-1} \right),\;\; i=1,\ldots, N,
\end{align}
which is (\ref{eq:pot1:sol}) in disguise. Indeed, by acting on (\ref{eq:pot1:sol}) with  ${\bf T}_{-1} {\bf T}_{-2}$ we obtain (\ref{eq:pot1:sold}). Here with ${\bf T}_r$ we denote the forward shift operator on the $r-$th direction and with ${\bf T}_{-s}$ we denote the backward shift operator on the $s-$th direction i.e.
\begin{align*}
{\bf T}_1: \phi^i_{m,n}\mapsto \phi^i_{m+1,n}\equiv \phi^i_1,\;\; {\bf T}_{-1}: \phi^i_{m,n}\mapsto \phi^i_{m-1,n}\equiv \phi^i_{-1},\;\; {\bf T}_{-2}: \phi^i_{m,n}\mapsto \phi^i_{m,n-1}\equiv \phi^i_{-2},\;\; \mbox{etc}.
\end{align*}
\item Due to statements $(1)-(3)$, (\ref{eq:pot1:sol})  constitutes an integrable hierarchy of difference systems in vertex variables defined on the black-white lattice \cite{Papageorgiou:2009II,ABS:2009,Boll:2011}.
\end{enumerate}
\end{proof}
 In the commutative setting where all variables are considered as elements of the center of the algebra $\mathbb{D},$ relations (\ref{Lax:eq3.1}) in terms of the multiplicative potential functions (\ref{Lax:eq6}), lead to
    \begin{align*}
    \prod_{i=1}^N \phi^i_1=\prod_{i=1}^N \phi^i, \quad \prod_{i=1}^N \phi^i_2=\prod_{i=1}^N \phi^i,
    \end{align*}
    so the potential functions are not independent but they satisfy
    \begin{align} \label{Vertex:eq1}
    \prod_{i=1}^N \phi^i=c,
    \end{align}
    where $c$ is a constant that can be scaled to $1$. From (\ref{eq:pot1:sol}) together with (\ref{Vertex:eq1}) one can solve rationally  for any of the corner variable sets $\{\phi^1,\ldots, \phi^N\},$ $\{\phi^1_1,\ldots, \phi^N_1\},$ $\{\phi^1_2,\ldots, \phi^N_2\},$ $\{\phi^1_{12},\ldots, \phi^N_{12}\},$ that is a reminiscence of the associated quadrirational difference system in edge variables (\ref{Lax:eq3}),(\ref{Lax:eq3.2}).

 In the non-commutative setting,    (\ref{eq:pot1:sol}) can be solved only for the corner variable sets $\{\phi^1,\ldots, \phi^N\}$ and  $\{\phi^1_{12},\ldots, \phi^N_{12}\}.$ If we impose the centrality assumptions   (\ref{Lax:eq3.2}), that in terms of the potential functions (\ref{Lax:eq6}) take the form
    \begin{align} \label{Vertex:eq2}
    \begin{aligned}
    \phi^{N}_1\left(\phi^{N-1}\right)^{-1}\cdots \phi^{2}_1\left(\phi^{1}\right)^{-1} \phi^{1}_1\left(\phi^{N}\right)^{-1}=1=   \phi^{N}_{12}\left(\phi^{N-1}_2\right)^{-1}\cdots \phi^{2}_{12}\left(\phi^{1}_2\right)^{-1} \phi^{12}_2\left(\phi^{N}_2\right)^{-1},\\
    \phi^{N}_2\left(\phi^{N-1}\right)^{-1}\cdots \phi^{2}_2\left(\phi^{1}\right)^{-1} \phi^{1}_2\left(\phi^{N}\right)^{-1}=1=   \phi^{N}_{12}\left(\phi^{N-1}_1\right)^{-1}\cdots \phi^{2}_{12}\left(\phi^{1}_1\right)^{-1} \phi^{12}_2\left(\phi^{N}_1\right)^{-1},
    \end{aligned}
    \end{align}
    then (\ref{eq:pot1:sol}) can be solved  for any corner variable set as in the commutative case.

     For arbitrary $N$  in the non-commutative setting, a point equivalent form of (\ref{eq:pot1:sol}) with $p^i=~q^i=~1,$ $\forall i\in \{1,\ldots,N\}$ was introduced in \cite{Doliwa_2013,Doliwa_2014} and it was shown that it serves as the non-commutative lattice-modified Gel'fand-Dikii hierarchy. In the commutative setting, the same system appeared first in \cite{Atkinson:2012} and afterwards in \cite{Fordy_2017}.

      Note that the point transformation $\phi^i_{m+k,n+l}\mapsto (p^{i+1})^k(q^{i+1})^l\phi^i_{m+k,n+l},\;k,l\in \mathbb{Z}$ that scales off $p^i, q^i,$ $i=1,2,\ldots,N$  in (\ref{eq:pot1:sol}), re-written in terms of $x^i,y^i$ by using (\ref{Lax:eq6}), results to a non-admissible Mobi\"us transformation. So, hierarchy (\ref{eq:pot1:sol}) together with (\ref{Vertex:eq2}) can be considered as a  $2N-$parameter extension  of the non-commutative lattice-modified Gel'fand-Dikii hierarchy.

\begin{example}[$N=2$]
For $N=2,$ we have the Lax matrix $L(\Phi_1,\Phi;P,p^0,\lambda)$ that reads:
\begin{align*}
L(\Phi_1,\Phi;P,p^0,\lambda)=\begin{pmatrix}
p^1&\lambda\, p^0 \phi^2_1 \left(\phi^1\right)^{-1}\\
p^0\phi^1_1\left(\phi^2\right)^{-1}&p^2
\end{pmatrix}
\end{align*}
and it serves as a strong Lax matrix for the integrable difference system in vertex variables (\ref{eq:pot1}) with $N=2$, namely
\begin{align} \label{MVertex:21}
q^0\left(p^{2}\phi^{1}_2\left(\phi^{2}\right)^{-1}-p^1\phi^{1}_{12}\left(\phi^{2}_1\right)^{-1}\right)-p^0 \left(q^{2}\phi^{1}_1\left(\phi^{2}\right)^{-1}-q^1\phi^{1}_{12}\left(\phi^{2}_2\right)^{-1}\right)=0, \\ \label{MVertex:22}
q^0\left(p^{1}\phi^{2}_2\left(\phi^{1}\right)^{-1}-p^2\phi^{2}_{12}\left(\phi^{1}_1\right)^{-1}\right)-p^0 \left(q^{1}\phi^{2}_1\left(\phi^{1}\right)^{-1}-q^2\phi^{2}_{12}\left(\phi^{1}_2\right)^{-1}\right)=0.
\end{align}
The centrality assumption (\ref{Vertex:eq2}) that now reads
\begin{align*}
\phi^2_1\left(\phi^1\right)^{-1}\phi^1_1\left(\phi^2\right)^{-1}=1=\phi^2_{12}\left(\phi^1_2\right)^{-1}\phi^1_{12}\left(\phi^2_2\right)^{-1},\quad
\phi^2_2\left(\phi^1\right)^{-1}\phi^1_2\left(\phi^2\right)^{-1}=1=\phi^2_{12}\left(\phi^1_1\right)^{-1}\phi^1_{12}\left(\phi^2_1\right)^{-1},
\end{align*}
allow us to re-write the difference system in terms of one potential function only. Indeed, from the centrality assumption we have
\begin{align*}
\phi^2_2\left(\phi^1\right)^{-1}&=\phi^2\left(\phi^1_2\right)^{-1}, & \phi^{2}_{12}\left(\phi^{1}_1\right)^{-1}&=\phi^2\left(\phi^{1}_1\right)^{-1}\phi^1\left(\phi^{1}_{12}\right)^{-1},\\
\phi^2_1\left(\phi^1\right)^{-1}&=\phi^2\left(\phi^1_1\right)^{-1}, & \phi^{2}_{12}\left(\phi^{1}_2\right)^{-1}&=\phi^2\left(\phi^{1}_2\right)^{-1}\phi^1\left(\phi^{1}_{12}\right)^{-1},\\
\end{align*}
that allow us to eliminate $\phi^2,\phi^2_1, \phi^2_{12},$ from (\ref{MVertex:22}) to obtain
\begin{align} \label{Vertex:mKdV}
q^0\left(p^1\phi^{-1}_2\phi_{12}-p^2\phi^{-1}_1\phi\right)-p^0\left(q^1\phi^{-1}_1\phi_{12}-q^2\phi^{-1}_2\phi\right)=0,
\end{align}
where for simplicity we have denoted $\phi:=\phi^1,$ $\phi_1:=\phi^1_1,$ etc. When $p^1=p^2=1=q^1=q^2$ and in the commutative setting, equation (\ref{Vertex:mKdV}) coincides with the lattice potential modified KdV equation \cite{Bianchi:1894,Hir-sG,nij-qui-cap} or up to a point transformation to $H3^0$ in \cite{ABS}. An one-parameter extension of the  lattice potential modified KdV equation (again in the commutative setting) was introduced in \cite{nij-qui-cap}, whereas a two-parameter extension of the same equation was introduced in \cite{Kouloukas:2012}.

The non-commutative version of the lattice potential modified KdV equation was introduced in \cite{bs:2002N} and it includes two parameters. The extended non-commutative lattice potential modified KdV equation that was introduced in this Section includes six parameters. So (\ref{Vertex:mKdV}) can be considered as a non-commutative version of the $4-$parameter extension of the lattice potential modified KdV equation.
\end{example}
\begin{example}[$N\geq 3$]
For $N=3,$ we have the Lax matrix $L(\Phi_1,\Phi;P,p^0,\lambda)$ that reads:
\begin{align*}
L(\Phi_1,\Phi;P,p^0,\lambda)=\begin{pmatrix}
p^1&0&\lambda\, p^0\phi^3_1\left(\phi^2\right)^{-1}\\
p^0\phi^1_1\left(\phi^3\right)^{-1}&p^2&0\\
0&p^0\phi^2_1\left(\phi^1\right)^{-1}&p^3
\end{pmatrix}
\end{align*}
and it serves as a strong Lax matrix for the integrable difference system in vertex variables (\ref{eq:pot1}) with $N=3$. Explicitly it reads
\begin{align} \label{MVertex:31}
q^0\left(p^{2}\phi^{1}_2\left(\phi^{3}\right)^{-1}-p^1\phi^{1}_{12}\left(\phi^{3}_1\right)^{-1}\right)-p^0 \left(q^{2}\phi^{1}_1\left(\phi^{3}\right)^{-1}-q^1\phi^{1}_{12}\left(\phi^{3}_2\right)^{-1}\right)=0, \\ \label{MVertex:32}
q^0\left(p^{3}\phi^{2}_2\left(\phi^{1}\right)^{-1}-p^2\phi^{2}_{12}\left(\phi^{1}_1\right)^{-1}\right)-p^0 \left(q^{3}\phi^{2}_1\left(\phi^{1}\right)^{-1}-q^2\phi^{2}_{12}\left(\phi^{1}_2\right)^{-1}\right)=0,\\ \label{MVertex:33}
q^0\left(p^{1}\phi^{3}_2\left(\phi^{2}\right)^{-1}-p^3\phi^{3}_{12}\left(\phi^{2}_1\right)^{-1}\right)-p^0 \left(q^{1}\phi^{3}_1\left(\phi^{2}\right)^{-1}-q^3\phi^{3}_{12}\left(\phi^{2}_2\right)^{-1}\right)=0,
\end{align}
From the centrality assumption (\ref{Vertex:eq2}) with $N=3$ we have
\begin{align*}
\phi^3_2\left(\phi^2\right)^{-1}&=\phi^3\left(\phi^1_2\right)^{-1}\phi^1\left(\phi^2_2\right)^{-1}, & \phi^{3}_{12}\left(\phi^{2}_1\right)^{-1}&=\phi^3\left(\phi^{1}_1\right)^{-1}\phi^1\left(\phi^2_1\right)^{-1}\phi^2\left(\phi^{1}_{12}\right)^{-1}\phi^1_1
\left(\phi^{2}_{12}\right)^{-1},\\
\phi^3_1\left(\phi^2\right)^{-1}&=\phi^3\left(\phi^1_1\right)^{-1}\phi^1\left(\phi^2_1\right)^{-1}, & \phi^{3}_{12}\left(\phi^{2}_2\right)^{-1}&=\phi^3\left(\phi^{1}_2\right)^{-1}\phi^1\left(\phi^2_2\right)^{-1}\phi^2\left(\phi^{1}_{12}\right)^{-1}\phi^1_2
\left(\phi^{2}_{12}\right)^{-1},
\end{align*}
that allow us to eliminate $\phi^3_1,\phi^3_2, \phi^3_{12},$ from (\ref{MVertex:33}) that together with (\ref{MVertex:32}) we  obtain
\begin{align} \label{Vertex:m-Bussinesq1}
\begin{aligned}
q^0\left(p^{3}\phi^{2}_2\left(\phi^{1}\right)^{-1}-p^2\phi^{2}_{12}\left(\phi^{1}_1\right)^{-1}\right)-p^0 \left(q^{3}\phi^{2}_1\left(\phi^{1}\right)^{-1}-q^2\phi^{2}_{12}\left(\phi^{1}_2\right)^{-1}\right)=0,\\
q^0\left(p^{1}\left(\phi^{1}_2\right)^{-1}\phi^{1}\left(\phi^{2}_2\right)^{-1}-p^3\left(\phi^{1}_1\right)^{-1}\phi^1\left(\phi^{2}_1\right)^{-1}\phi^2
\left(\phi^{1}_{12}\right)^{-1}\phi^1_1\left(\phi^{2}_{12}\right)^{-1}    \right)\\
\qquad \quad -p^0\left(q^{1}\left(\phi^{1}_1\right)^{-1}\phi^{1}\left(\phi^{2}_1\right)^{-1}-q^3\left(\phi^{1}_2\right)^{-1}\phi^1\left(\phi^{2}_2\right)^{-1}\phi^2
\left(\phi^{1}_{12}\right)^{-1}\phi^1_2\left(\phi^{2}_{12}\right)^{-1}    \right)=0
\end{aligned}
\end{align}
 When $p^1=p^2=p^3=1=q^1=q^2=q^3$ and in the commutative setting, equation (\ref{Vertex:m-Bussinesq1}) serves as the lattice-modified Boussinesq equation \cite{Nijhoff:1992,JamesPhd,Atkinson:2012}, that in the non-commutative setting it was introduced in \cite{Doliwa_2013}.
So (\ref{Vertex:m-Bussinesq1}) serves as a non-commutative version of the $6-$parameter extension of the lattice-modified Boussinesq equation.

In order to recover exactly the form of the  lattice-modified Boussinesq equation introduced in \cite{JamesPhd,Atkinson:2012} but in the non-commutative setting, instead of the centrality assumption (\ref{Vertex:eq2}) with $N=3$, one should consider (mimicking the commutative setting) the restricted assumption $\phi^3\phi^2\phi^1=1$ and perform the re-potentialization
\begin{align} \label{repot}
(\phi^3,\phi^2,\phi^1)\mapsto \left((\phi^2)^{-1},\phi^2 (\phi^1)^{-1},\phi^1\right).
\end{align}
Then, (\ref{MVertex:31})-(\ref{MVertex:33}) together with (\ref{repot}) lead exactly to the desired form.

For arbitrary $N,$  the hierarchy of difference systems in vertex variables (\ref{eq:pot1}) is a non-commutative integrable  hierarchy that when the centrality assumptions (\ref{Vertex:eq2}) are imposed, it serves as   the non-commutative version of a $2N-$parameter extension of the lattice-modified Gel'fand-Dikii hierarchy. In the commutative setting the lattice-modified Gel'fand-Dikii hierarchy was introduced implicitly in \cite{Nijhoff:1992} (the first two members of the hierarchy were presented explicitly), the whole hierarchy was explicitly presented in \cite{JamesPhd,Atkinson:2012}. Moreover it includes two parameters. Note that the non-commutative version of the lattice-modified Gel'fand-Dikii hierarchy that  was introduced in \cite{Doliwa_2013}  includes two parameters as well. The extended non-commutative lattice-modified Gel'fand-Dikii hierarchy that was introduced in this Section includes $2N+2$ parameters. Furthermore, if instead of the centrality assumptions (\ref{Vertex:eq2}) we consider $\phi^N\cdots \phi^2\phi^1=1$ and perform the re-potentialization
\begin{align} \label{repot2}
(\phi^N,\phi^{N-1},\ldots,\phi^2,\phi^1)\mapsto \left((\phi^{N-1})^{-1},\phi^{N-1} (\phi^{N-2})^{-1},\ldots,\phi^2(\phi^1)^{-1},\phi^1\right),
\end{align}
then form (\ref{eq:pot1}) together with (\ref{repot2}), we  obtain a form of the non-commutative lattice-modified Gel'fand-Dikii hierarchy that  for $p^i=q^i=1,\;\forall i\in \{1,\ldots,N\},$ coincides with the form introduced in \cite{Doliwa_2013}.
\end{example}

\subsubsection{Additive potentials and the lattice-Schwarzian Gel'fand-Dikii hierarchy}

The hierarchy of difference systems (\ref{Lax:eq3}) in its polynomial form constitutes of the sets of equations (\ref{Lax:eq1}), (\ref{Lax:eq2}).
Equations (\ref{Lax:eq2}) are identically satisfied if we set
\be \label{Lax:eq7}
x^i=p^i\chi^i_1-p^{i+1}\chi^i \quad y^i=q^i\chi^i_2-q^{i+1}\chi^i, \quad i=1,\ldots,N.
\ee
The functions $\chi^i,$ $i=1,\ldots,N$ can be considered as potential functions. In terms of these potential functions, equations (\ref{Lax:eq2}) are identically satisfied, while equations (\ref{Lax:eq1}) read
\be \label{Vertex:pot2}
\left(p^i\chi^{i}_{12}-p^{i+1}\chi^{i}_2 \right)\left(q^{i-1} \chi^{i-1}_2-q^i \chi^{i-1} \right)-\left(q^i\chi^{i}_{12}-q^{i+1}\chi^{i}_1 \right)\left(p^{i-1} \chi^{i-1}_1-p^i \chi^{i-1} \right)=0,
\ee
$i=1,\ldots,N$ and constitute a hierarchy of difference systems in vertex variables.
\begin{prop} \label{prop:apot}
For the hierarchy of difference systems in vertex variables (\ref{Vertex:pot2}) it holds:
\begin{enumerate}
\item arises as the compatibility conditions of the Lax equation (\ref{lax-eq-ver}), associated with the strong Lax matrix
\begin{align*}
L(\Phi_1,\Phi;P,\lambda)= \begin{pmatrix}
                      p^1&0&\cdots&0&\lambda \left(p^N\chi^N_1-p^1\chi^N\right)\\
                      p^1\chi^1_1-p^2\chi^1&p^2& 0& \cdots &0 \\
                      0  &p^2\chi^2_1-p^3\chi^2&\ddots& {}  &\vdots\\
                      \vdots & &\ddots &p^{N-1} &0\\
                       0     & 0 &   &  p^{N-1}\chi^{N-1}_1-p^N\chi^{N-1}    &p^N
                     \end{pmatrix};
\end{align*}
\item it is multidimensional consistent;
\item it is invariant under the following permutations of the dependent variables
\begin{align*}
\tau :(\phi^j,\phi^j_1,\phi^j_2,\phi^j_{12},p^0,q^0,p^j,q^j)\mapsto (\phi^j,\phi^j_2,\phi^j_1,\phi^j_{12},q^0,p^0,q^j,p^j),\\
\sigma : (\phi^j,\phi^j_1,\phi^j_2,\phi^j_{12},p^0,q^0,p^j,q^j)\mapsto (\phi^j_{12},\phi^j_2,\phi^j_1,\phi^j,q^0,p^0,q^j,p^j),\\
\forall j\in \{1,\ldots,N\},
\end{align*}
i.e. it respects the {\it rombic} symmetry;

\item it is an integrable hierarchy of difference systems in vertex variables.
\end{enumerate}
\end{prop}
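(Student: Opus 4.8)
The plan is to follow \emph{verbatim} the strategy of Proposition \ref{prop:mpot}, replacing the multiplicative substitution by the additive one (\ref{Lax:eq7}). For part (1) I would first substitute $x^i=p^i\chi^i_1-p^{i+1}\chi^i$ and $y^i=q^i\chi^i_2-q^{i+1}\chi^i$ directly into the edge Lax matrix (\ref{Lax:eq0}). The subdiagonal entry $x^i$ becomes $p^i\chi^i_1-p^{i+1}\chi^i$ and, since superscripts are read modulo $N$, the corner entry $x^N$ becomes $p^N\chi^N_1-p^{N+1}\chi^N=p^N\chi^N_1-p^1\chi^N$; this reproduces exactly the matrix displayed in the statement, so that $L^{N,1}(X;P,\lambda)=L(\Phi_1,\Phi;P,\lambda)$. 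Reading off the four factors of the edge Lax equation (\ref{lax-eq}) one checks that $L(X_2;P,\lambda)=L(\Phi_{12},\Phi_2;P,\lambda)$, $L(Y;Q,\lambda)=L(\Phi_2,\Phi;Q,\lambda)$ and $L(Y_1;Q,\lambda)=L(\Phi_{12},\Phi_1;Q,\lambda)$, so that (\ref{lax-eq}) becomes precisely the vertex Lax equation (\ref{lax-eq-ver}). Since $L^{N,1}$ is a strong Lax matrix for the edge system (\ref{Lax:eq1}),(\ref{Lax:eq2}), and since the additive ansatz (\ref{Lax:eq7}) solves (\ref{Lax:eq2}) identically while turning (\ref{Lax:eq1}) into (\ref{Vertex:pot2}), the vertex Lax equation holds if and only if (\ref{Vertex:pot2}) holds; this yields both the compatibility statement and strongness (the converse).

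For part (2) I would argue exactly as in Proposition \ref{prop:mpot}, invoking the $n$-dimensional compatibility of the edge system established in Proposition \ref{prop:3.4}. The additive potential lifts to $n$ directions as $X^{i,a}=p^{i,a}\chi^i_a-p^{i+1,a}\chi^i$, $a\in\{1,\ldots,n\}$, which is the multidimensional analogue of (\ref{Lax:eq7}). Substituting this into the compatibility relation $X^{i,a}_{bc}=X^{i,a}_{cb}$ of Proposition \ref{prop:3.4} and using that ordinary lattice shifts commute (so $\chi^i_{bc}=\chi^i_{cb}$), the $p^{i+1,a}$-terms cancel and, $p^{i,a}$ being central and invertible, one is left with $\chi^i_{abc}=\chi^i_{acb}$ for all $i$ and all distinct $a,b,c$, which is the asserted multidimensional consistency.

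For part (3) I would verify invariance under $\tau$ and $\sigma$ by direct substitution into (\ref{Vertex:pot2}), read on the potential variables $\chi^i$. Under $\tau$, which interchanges the two lattice directions ($\chi^j_1\leftrightarrow\chi^j_2$, $p^j\leftrightarrow q^j$, with $\chi^j,\chi^j_{12}$ fixed), the factor $p^i\chi^i_{12}-p^{i+1}\chi^i_2$ is mapped to $q^i\chi^i_{12}-q^{i+1}\chi^i_1$ and $q^{i-1}\chi^{i-1}_2-q^i\chi^{i-1}$ to $p^{i-1}\chi^{i-1}_1-p^i\chi^{i-1}$, so the two products of (\ref{Vertex:pot2}) trade places \emph{with their internal left/right order preserved}, giving the overall negative of the equation; hence $=0$ is preserved. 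Under $\sigma$ (the point reflection of the elementary square, $\chi^j\leftrightarrow\chi^j_{12}$, $\chi^j_1\leftrightarrow\chi^j_2$, $p^j\leftrightarrow q^j$) the equation is transformed into a copy that is recovered from (\ref{Vertex:pot2}) by the backward shift ${\bf T}_{-1}{\bf T}_{-2}$, exactly as the $\sigma$-image (\ref{eq:pot1:sold}) of the multiplicative system was identified with a shift of (\ref{eq:pot1:sol}). Finally part (4) follows automatically: statements (1)--(3) identify (\ref{Vertex:pot2}) as a Lax-integrable, multidimensionally consistent system with rhombic symmetry on the black-white lattice, hence an integrable hierarchy of difference systems in vertex variables.

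I expect the only genuinely delicate point to be the $\sigma$-invariance in part (3): one must pin down the precise backward shift that carries the $\sigma$-transformed equation back onto (\ref{Vertex:pot2}), keeping careful track of the corner of the square at which each factor is evaluated and, in the non-commutative setting, of the left/right order of the two factors, which may not be permuted freely. The remaining verifications (the matrix substitution in (1), the cancellation in (2), and the $\tau$-invariance in (3)) are routine bookkeeping once the additive analogue of each step of Proposition \ref{prop:mpot} is written down.
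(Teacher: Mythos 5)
Your proposal is correct and takes essentially the same approach as the paper: the paper's entire proof is the single sentence ``The proof follows similarly to the proof of Proposition \ref{prop:mpot}'', and your argument is precisely that analogy carried out in detail (additive substitution into the edge Lax matrix for (1), reduction to Proposition \ref{prop:3.4} for (2), direct checking of $\tau$ and the shift-identification of the $\sigma$-image for (3)). You in fact supply more detail than the paper does, and you correctly single out the non-commutative $\sigma$-invariance as the only step needing real care.
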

\begin{proof}
The proof follows similarly to the proof of Proposition \ref{prop:mpot}.
\end{proof}
 In the non-commutative setting,    (\ref{Vertex:pot2}) can be solved only for the corner variable sets $\{\chi^1,\ldots, \chi^N\}$ and  $\{\chi^1_{12},\ldots, \chi^N_{12}\}.$ If we impose the centrality assumptions   (\ref{Lax:eq3.2}), that in terms of the potential functions (\ref{Lax:eq7}) take the form
    \begin{align} \label{Vertex:aeq2}
    \begin{aligned}
    \left( p^N\chi_1^N-p^1\chi^N\right)\cdots \left(p^2\chi_1^2-p^3\chi^2 \right)\left( p^1\chi_1^1-p^2\chi^1\right)=(p^0)^N,\\
    \left( p^N\chi_{12}^N-p^1\chi_2^N\right)\cdots \left(p^2\chi_{12}^2-p^3\chi_2^2 \right)\left( p^1\chi_{12}^1-p^2\chi_2^1\right)=(p^0)^N,\\
    \left( p^N\chi_2^N-p^1\chi^N\right)\cdots \left(p^2\chi_2^2-p^3\chi^2 \right)\left( p^1\chi_2^1-p^2\chi^1\right)=(q^0)^N,\\
    \left( p^N\chi_{12}^N-p^1\chi_1^N\right)\cdots \left(p^2\chi_{12}^2-p^3\chi_1^2 \right)\left( p^1\chi_{12}^1-p^2\chi_1^1\right)=(p^0)^N,
    \end{aligned}
    \end{align}
    then (\ref{Vertex:pot2}) can be solved  for any corner variable set. %as in the commutative case.

     The point transformation $\chi^i_{m+k,n+l}\mapsto \left(\frac{p^{i-1}}{p^i}\right)^k\left(\frac{q^{i-1}}{q^i}\right)^l\chi^i_{m+k,n+l},\;k,l\in \mathbb{Z}$ that scales off $p^i, q^i,$ $i=1,2,\ldots,N$  in (\ref{Vertex:pot2}), re-written in terms of $x^i,y^i$ by using (\ref{Lax:eq7}), results to a non-admissible Mobi\"us transformation. So, for arbitrary $N$ the system of equations (\ref{Vertex:pot2}) together with the centrality assumptions (\ref{Vertex:aeq2}), can be considered as  a $2N-$parameter extension of non-commutative lattice Schwarzian Gel'fand-Dikii hierarchy and that is apparent in the examples that follow. In the commutative case the lattice-Schwarzian Gel'fand-Dikii hierarchy was introduced in \cite{JamesPhd}.
\begin{example}[$N=2$]
For $N=2,$ we have the Lax matrix
\begin{align*}
L(\Phi_1,\Phi;P,\lambda)=\begin{pmatrix}
p^1&\lambda \left(p^2\chi^2_1-p^1\chi^2\right)\\
p^1\chi^1_1-p^2\chi^1&p^2
\end{pmatrix}
\end{align*}
and it serves as a strong Lax matrix for the integrable difference system in vertex variables (\ref{Vertex:pot2}) with $N=2$, namely
\begin{align} \label{AVertex:21}
\left(p^1\chi^{1}_{12}-p^{2}\chi^{1}_2 \right)\left(q^{2} \chi^{2}_2-q^1 \chi^{2} \right)-\left(q^1\chi^{1}_{12}-q^{2}\chi^{1}_1 \right)\left(p^{2} \chi^{2}_1-p^1 \chi^{2} \right)=0, \\ \label{AVertex:22}
\left(p^2\chi^{2}_{12}-p^{1}\chi^{2}_2 \right)\left(q^{1} \chi^{1}_2-q^2 \chi^{1} \right)-\left(q^2\chi^{2}_{12}-q^{1}\chi^{2}_1 \right)\left(p^{1} \chi^{1}_1-p^2 \chi^{1} \right)=0.
\end{align}
The centrality assumption (\ref{Vertex:aeq2}) that now reads
\begin{align} \label{central:apot}
\begin{aligned}
\left(p^2\chi^2_1-p^1\chi^2\right)\left(p^1\chi^1_1-p^2\chi^1\right)=\left(p^0\right)^2=\left(p^2\chi^2_{12}-p^1\chi^2_2\right)\left(p^1\chi^1_{12}-p^2\chi^1_2\right),\\
\left(q^2\chi^2_2-q^1\chi^2\right)\left(q^1\chi^1_2-q^2\chi^1\right)=\left(q^0\right)^2=\left(q^2\chi^2_{12}-q^1\chi^2_1\right)\left(q^1\chi^1_{12}-q^2\chi^1_2\right)
\end{aligned}
\end{align}
allow us to re-write the difference system in terms of one potential function only. Indeed, from (\ref{AVertex:21}) using (\ref{central:apot}) we eliminate $\chi^2$ and its shifts to obtain:
\begin{align} \label{Vertex:SKdV}
\left( q^0\right)^2\left(p^1\chi_{12}-p^2\chi_2 \right)\left( q^1\chi_2-q^2\chi\right)^{-1}=
\left( p^0\right)^2\left(q^1\chi_{12}-q^2\chi_1 \right)\left( p^1\chi_1-p^2\chi\right)^{-1},
\end{align}
where for simplicity we have denoted $\chi:=\chi^1,$ $\chi_1:=\chi^1_1,$ etc. When $p^1=p^2=1=q^1=q^2$ and in the commutative setting, equation (\ref{Vertex:SKdV})
coincides with the lattice-Schwarzian KdV equation \cite{Nijhoff:1995}, that is also known as the cross-ratio equation or   $Q1^0$ in \cite{ABS}.  For arbitrary $q^i,p^i,\;i=1,2,$  equation (\ref{Vertex:SKdV}) was introduced in \cite{Kouloukas:2015}.
The non-commutative version of the lattice-Schwarzian KdV equation was introduced in \cite{bs:2002N} and it includes two parameters. The extended non-commutative lattice-Schwarzian KdV equation that was introduced in this Section includes six parameters. So (\ref{Vertex:SKdV}) serves as the non-commutative version of a $4-$parameter extension of the lattice-Schwarzian KdV equation.
\end{example}
\begin{example}[$N\geq 3$]
For $N=3,$ we have the Lax matrix
\begin{align*}
L(\Phi_1,\Phi;P,\lambda)=\begin{pmatrix}
p^1&0&\lambda \left(p^3\chi^3_1-p^1\chi^3\right)\\
p^1\chi^1_1-p^2\chi^1&p^2&0\\
0&p^2\chi^2_1-p^3\chi^2&p^3
\end{pmatrix}
\end{align*}
and it serves as a strong Lax matrix for the integrable difference system in vertex variables (\ref{Vertex:pot2}) with $N=3$, namely
\begin{align} \label{AVertex:31}
\left(p^1\chi^{1}_{12}-p^{2}\chi^{1}_2 \right)\left(q^{3} \chi^{3}_2-q^1 \chi^{3} \right)-\left(q^1\chi^{1}_{12}-q^{2}\chi^{1}_1 \right)\left(p^{3} \chi^{3}_1-p^1 \chi^{3} \right)=0, \\ \label{AVertex:32}
\left(p^2\chi^{2}_{12}-p^{3}\chi^{2}_2 \right)\left(q^{1} \chi^{1}_2-q^2 \chi^{1} \right)-\left(q^2\chi^{2}_{12}-q^{3}\chi^{2}_1 \right)\left(p^{1} \chi^{1}_1-p^2 \chi^{1} \right)=0,\\  \label{AVertex:33}
\left(p^3\chi^{3}_{12}-p^{1}\chi^{3}_2 \right)\left(q^{2} \chi^{2}_2-q^3 \chi^{2} \right)-\left(q^3\chi^{3}_{12}-q^{1}\chi^{3}_1 \right)\left(p^{2} \chi^{2}_1-p^3 \chi^{2} \right)=0,
\end{align}
The centrality assumption (\ref{Vertex:aeq2}) that now reads
\begin{align} \label{central:apot3}
\begin{aligned}
\left(p^3\chi^3_1-p^1\chi^3\right)\left(p^2\chi^2_1-p^3\chi^2\right)\left(p^1\chi^1_1-p^2\chi^1\right)=\left(p^0\right)^3,\\
\left(p^3\chi^3_{12}-p^1\chi^3_2\right)\left(p^2\chi^2_{12}-p^3\chi^2_2\right)\left(p^1\chi^1_{12}-p^2\chi^1_2\right)=\left(p^0\right)^3,\\
\left(q^3\chi^3_2-q^1\chi^3\right)\left(q^2\chi^2_3-q^3\chi^2\right)\left(q^1\chi^1_3-q^2\chi^1\right)=\left(q^0\right)^3,\\
\left(q^3\chi^3_{12}-q^1\chi^3_1\right)\left(q^2\chi^2_{12}-q^3\chi^2_1\right)\left(q^1\chi^1_{12}-q^2\chi^1_1\right)=\left(q^0\right)^3,\\
\end{aligned}
\end{align}
allow us to re-write the difference system in terms of  potential functions $\chi^1$ and $\chi^2$ only. Indeed, by using (\ref{central:apot3}), we eliminate $\chi^3$ and its shifts from  (\ref{AVertex:31}) and (\ref{AVertex:33}) to obtain:
\begin{align} \label{Vertex:Sbuss}
\begin{aligned}
\left( q^0\right)^3\left(p^1\chi^1_{12}-p^2\chi^1_2 \right)\left( q^1\chi^1_2-q^2\chi^1\right)^{-1}\left( q^2\chi^2_2-q^3\chi^2\right)^{-1} & &\\
&\hspace{-4.4cm}=\left( p^0\right)^3\left(q^1\chi^1_{12}-q^2\chi^1_1 \right)\left( p^1\chi^1_1-p^2\chi^1\right)^{-1}\left( p^2\chi^2_1-p^3\chi^2\right)^{-1},&\\
\left( p^0\right)^3\left(p^1\chi^1_{12}-p^2\chi^1_2 \right)^{-1}\left( p^2\chi^2_{12}-p^3\chi^2_2\right)^{-1}\left( q^2\chi^2_2-q^3\chi^2\right) & &\\
&\hspace{-4.4cm}=\left( q^0\right)^3\left(q^1\chi^1_{12}-q^2\chi^1_1 \right)^{-1}\left( q^2\chi^2_{12}-q^3\chi^2_1\right)^{-1}\left( p^2\chi^2_1-p^3\chi^2\right).&
\end{aligned}
\end{align}
 When $p^1=p^2=1=q^1=q^2$ and in the commutative setting, equation (\ref{Vertex:Sbuss})
coincides with the lattice-Schwarzian Boussinesq equation \cite{Ni1}.  So (\ref{Vertex:Sbuss}) stands for the non-commutative 4-parameter extension of the lattice-Schwarzian Boussinesq equation.

For arbitrary $N,$  the hierarchy of difference systems in vertex variables (\ref{Vertex:pot2}) is a non-commutative integrable  hierarchy that when the centrality assumptions (\ref{Vertex:aeq2}) are imposed, it can be considered as   the non-commutative version of a $2N-$parameter extension of the lattice-Schwarzian Gel'fand-Dikii hierarchy. In the commutative setting the lattice-Schwarzian Gel'fand-Dikii hierarchy was introduced in \cite{JamesPhd}. Moreover it includes two parameters.  The extended non-commutative lattice-Schwarzian Gel'fand-Dikii hierarchy that was introduced in this Section includes $2N+2$ parameters.
\end{example}

\section{The Lax matrix $L^{N,2}$ and integrable hierarchies of difference systems} \label{Section:4}
For $k=2,\; N\geq 3$ we have the the Lax matrix
$$
L^{N,2}(X;P,\lambda):=P+\nabla^1 X^{(1)}+\nabla^2 X^{(2)}+\lambda\left( \Delta^1 X^{(1)}+\Delta^2 X^{(2)}\right),
$$
that explicitly reads:
\begin{equation}\label{Lax4:eq0}
L^{N,2}(X;P,\lambda)=\begin{pmatrix}
                      p^1&0& &\cdots &0 & \lambda\, x^{2,N-1}&\lambda\, x^{1,N}\\
                      x^{1,1}&p^2& 0&  &\cdots &0 &\lambda\, x^{2,N} \\
                      x^{2,1}  &x^{1,2}&\ddots &\ddots &   & &0\\
                      %0  &x^{2,2} &x^{1,3}& \ddots  &{} & \vdots\\
                      0  &\ddots &\ddots&  &  &{} & \vdots\\
                      & &  & & \ddots &\ddots &\\
                      \vdots &{} & &\ddots &\ddots  &p^{N-1} &0\\
                      0     & 0 & {}  & &x^{2,N-2}&  x^{1,N-1}    &p^N
                     \end{pmatrix},
\end{equation}
 The compatibility conditions (\ref{eq0})-(\ref{eq2}) read:
\begin{align}
\begin{aligned}\label{Lax4:eq1}
x^{2,i}_2 y^{2,i+1}=y^{2,i}_1 x^{2,i+1},
\end{aligned}\\
\begin{aligned}\label{Lax4:eq2}
q^i x^{2,i}_2-q^{i-1}x^{2,i}+x^{1,i+1}_2y^{1,i}=p^i y^{2,i}_1-p^{i-1}y^{2,i}+y^{1,i+1}_1x^{1,i},
\end{aligned}\\
\begin{aligned}\label{Lax4:eq3}
x^{2,i}_2y^{1,i-1}+x^{1,i+1}_2y^{2,i-1}=y^{2,i}_1x^{1,i-1}+y^{1,i+1}_1x^{2,i-1},
\end{aligned}\\
\begin{aligned}\label{Lax4:eq4}
q^ix^{1,i}_2-q^{i+1}x^{1,i}=p^iy^{1,i}_1-p^{i+1}y^{1,i},
\end{aligned}
\end{align}
where the superscript $i=1,2,\ldots, N,$ is considered modulo $N$.
\subsection{An integrable hierarchy of difference systems in edge variables}
\begin{prop} \label{Lax4:prop1}
For the hierarchy of difference systems in edge variables  (\ref{Lax4:eq1})-(\ref{Lax4:eq4}), it holds:
\begin{enumerate}
\item matrix (\ref{Lax4:eq0}), serves as its strong Lax matrix;
\item it is birational;
\end{enumerate}
\end{prop}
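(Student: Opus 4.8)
For part (1), the plan is to read off the scalar content of the matrix Lax equation (\ref{lax-eq}) for $L=L^{N,2}$ and identify it with the stated system. First I would expand the product $L^{N,2}(X_2;P,\lambda)L^{N,2}(Y;Q,\lambda)=L^{N,2}(Y_1;Q,\lambda)L^{N,2}(X;P,\lambda)$ and collect the coefficients of $\lambda^0,\lambda^1,\lambda^2$; by the same bookkeeping that produced (\ref{comp:1}) this yields the three matrix equations (\ref{eq0})--(\ref{eq2}) specialised to $k=2$. Using Remark \ref{rem1} to locate the nonzero levels of each $\nabla^a\nabla^b$ and $\Delta^a\Delta^b$ product, I would then compare entries level by level; a direct (if tedious) computation shows that these scalar relations are exactly the four families (\ref{Lax4:eq1})--(\ref{Lax4:eq4}). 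The crucial consistency check is the count: we have already established that (\ref{eq0})--(\ref{eq2}) amount to $2kN=4N$ scalar equations, and (\ref{Lax4:eq1})--(\ref{Lax4:eq4}) are four families indexed by $i=1,\dots,N$, i.e. also $4N$ equations, so nothing is lost or spurious. Since the requirement that the identity hold for all $\lambda$ is equivalent to matching the three $\lambda$-graded coefficients, and each matrix coefficient is equivalent to its entrywise relations, (\ref{lax-eq}) holds if and only if (\ref{Lax4:eq1})--(\ref{Lax4:eq4}) hold. This is simultaneously the Lax property and its converse, so $L^{N,2}$ is a strong Lax matrix in the sense of Definition \ref{Lax2:def1}.

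For part (2), the key observation is that, regarded as equations for the unshifted variables $x^{1,i},x^{2,i},y^{1,i},y^{2,i}$ with the shifted variables $x^{1,i}_2,x^{2,i}_2,y^{1,i}_1,y^{2,i}_1$ playing the role of coefficients, each of (\ref{Lax4:eq1})--(\ref{Lax4:eq4}) is \emph{linear} (the coefficients being non-commutative and acting on prescribed sides). Thus inverting the system is a $4N\times 4N$ non-commutative linear problem, to be solved exactly as (\ref{Lax:eq1}),(\ref{Lax:eq2}) were solved into (\ref{Lax:eq3inv}) for $k=1$ in Proposition \ref{prop1.2}. The plan is to eliminate the $y$-variables first: equation (\ref{Lax4:eq4}) expresses each $y^{1,i}$ as a rational function of $x^{1,i}$ and the known data through the central factor $p^{i+1}$, while equation (\ref{Lax4:eq1}), relabelled $i\mapsto i-1$, expresses each $y^{2,i}$ as $(x^{2,i-1}_2)^{-1}y^{2,i-1}_1x^{2,i}$, i.e. rationally in terms of $x^{2,i}$ and the known data. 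Substituting these into the coupling equations (\ref{Lax4:eq2}),(\ref{Lax4:eq3}) reduces them to a linear system in the remaining unknowns $x^{1,i},x^{2,i}$ alone, which I would then solve by isolating each unknown and multiplying by the appropriate inverse, so that every edge variable is obtained as a birational expression. Solving the system in the opposite direction, for $x^{j,i}_2,y^{j,i}_1$ in terms of $x^{j,i},y^{j,i}$, is entirely analogous, and together these establish birationality in the sense of Definition \ref{def2}.

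The step I expect to be the main obstacle is closing this final linear system for $x^{1,i},x^{2,i}$. Unlike the $k=1$ case, the cross-terms $x^{1,i+1}_2y^{1,i}$ and $y^{1,i+1}_1x^{1,i}$ in (\ref{Lax4:eq2}), and $x^{1,i+1}_2y^{2,i-1}$, $y^{1,i+1}_1x^{2,i-1}$ in (\ref{Lax4:eq3}), genuinely entangle the level-one and level-two families, so the elimination does not decouple and the non-commutative order of every factor, left versus right, must be tracked with care when inverting. Concretely, the delicate points are to verify that the combinations one is forced to invert (the non-commutative analogues of the factors $q^ix_2^i-p^iy_1^i$ in (\ref{Lax:eq3inv})) are indeed invertible in $\mathbb{D}$, and that the recursion in the cyclic index $i$ taken modulo $N$ closes consistently. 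Once this non-degeneracy is confirmed, rationality of all the resulting expressions, and hence birationality, follows.
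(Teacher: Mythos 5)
Your part (1) is the same verification the paper performs (the paper simply asserts that the compatibility conditions (\ref{eq0})--(\ref{eq2}) for $L^{N,2}$ \emph{are} the four families (\ref{Lax4:eq1})--(\ref{Lax4:eq4}); your level-by-level bookkeeping and the $2kN=4N$ count are exactly the right way to substantiate that). For part (2) your key observation --- that the system is linear over $\mathbb{D}$ in one set of edge variables when the other set is treated as (one-sided) coefficients --- is also the engine of the paper's proof, but the two arguments organize the inversion differently. The paper assembles all $4N$ equations into a single row-vector equation $(\mathbf{x}^1_2,\mathbf{x}^2_2,\mathbf{y}^1_1,\mathbf{y}^2_1)\,M=(\mathbf{b}^1,\mathbf{b}^2,\mathbf{0},\mathbf{0})$, where $M$ is a $4\times 4$ array of order-$N$ \emph{diagonal} blocks, and inverts $M$ by the Schur block-inversion formula; you instead eliminate $y^{1,i}$ via (\ref{Lax4:eq4}) and $y^{2,i}$ via (\ref{Lax4:eq1}) and then confront a residual linear system for $x^{1,i},x^{2,i}$. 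Both routes work, but the paper's formulation buys something you flag as your ``main obstacle'': since every block of $M$ is diagonal (possibly with cyclically permuted entries), the $4N\times 4N$ problem splits, after regrouping by the lattice index $i$, into $N$ independent $4\times 4$ linear problems over $\mathbb{D}$ --- equivalently, your post-elimination system for $(x^{1,i},x^{2,i})$ decouples into $N$ independent $2\times 2$ problems, so there is no global recursion around the $N$-cycle to close and your worry on that point is unfounded. What remains in either version is the generic invertibility of the Schur complements (respectively of your $2\times 2$ ``determinants''), which the paper, like you, assumes without comment.
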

\begin{proof}
\begin{enumerate}
\item The compatibility conditions (\ref{eq0})-(\ref{eq2}) for the Lax matrix (\ref{Lax4:eq0}) are exactly the hierarchy (\ref{Lax4:eq1})-(\ref{Lax4:eq4}).
\item In matrix form the hierarchy of difference systems    (\ref{Lax4:eq1})-(\ref{Lax4:eq4}) reads: % $\bf x_2 M=\bf b,$   where
\begin{align} \label{Lax4:sol}
\begin{pmatrix}
{\bf x}^1_2&{\bf x}^2_2&{\bf y}^1_1&{\bf y}^2_1
\end{pmatrix}
\begin{pmatrix}
Q& \overline{Y}^{(1)} &\underline{Y}^{(2)} & {\bf 0}_N\\
{\bf 0}_N & \overline{Q} & \underline{Y}^{(1)} & Y^{(2)}\\
-P& -\overline{X}^{(1)} &-\underline{X}^{(2)} & {\bf 0}_N\\
{\bf 0}_N & -\overline{P} & -\underline{X}^{(1)} & -X^{(2)}
\end{pmatrix}=
\begin{pmatrix}
{\bf b}^1 &
{\bf b}^2 &
\bf 0 &
\bf 0
\end{pmatrix},
\end{align}
where ${\bf 0}_N$ stands for the order $N$ zero matrix, while $\bf 0$ stands for the $N-$component zero row vector. In the formulae above participate as well the following $N-$component vectors ${\bf x}^1_2:=(x^{1,1}_2,x^{1,2}_2,\ldots,x^{1,N}_2),$ ${\bf x}^2_2:=(x^{2,N}_2,x^{2,1}_2,\ldots,x^{2,N-1}_2),$ ${\bf y}^1_1:=(y^{1,1}_1,y^{1,2}_1,\ldots,y^{1,N}_1)$ and ${\bf y}^2_1:=(y^{2,N}_1,y^{2,1}_1,\ldots,y^{2,N-1}_1).$ We also have the diagonal order $N$ matrices,
\begin{align*}
&Q:=diag(q^1,q^2,\ldots,q^N),&\overline{Q}:=diag(q^N,q^1,\ldots,q^{N-1}),\\
&X^{(j)}:=diag(x^{j,1},x^{j,2},\ldots, x^{j,N}),& \overline{X}^{(j)}:=diag(x^{j,N},x^{j,1},\ldots,x^{j,N-1}),\\
&\underline{X}^{(j)}:=diag(x^{j,2},\ldots,x^{j,N},x^{j,1}),&{}
 \end{align*}
 $j=1,2$ and similarly are defined the diagonal matrices $P,Y^{(j)},\overline{Y}^{(j)}$ and $\underline{Y}^{(j)}.$ Finally, the $N-$component vectors
  \begin{align*}
  {\bf b}^1:=(q^2x^{1,1}-p^2y^{1,1},q^3x^{1,2}-p^3y^{1,2},\ldots,q^1x^{1,N}-p^1y^{1,N}),\\
   {\bf b}^2:=(q^1x^{2,N}-p^1y^{2,N},q^2x^{2,1}-p^2y^{2,1},\ldots,q^Nx^{2,N-1}-p^Ny^{2,N-1}),
  \end{align*}
also participate in (\ref{Lax4:sol}).

The Schur block matrix inversion formula, states  that for the block matrix
\begin{align*}
M=\begin{pmatrix}
A&B\\
C&D
\end{pmatrix},
\end{align*}
 its inverse matrix $M^{-1}$ reads
\begin{align*}
M^{-1}=\begin{pmatrix}
(M/D)^{-1}&(M/B)^{-1}\\
(M/C)^{-1}&(M/A)^{-1}
\end{pmatrix},&&\begin{array}{ll}
                 M/A:=D-CA^{-1}B,& M/B:=C-DB^{-1}A,\\
                 M/C:=B-AC^{-1}D,& M/D:=A-BD^{-1}C.
                 \end{array}
\end{align*}
By using this inversion formula
in (\ref{Lax4:sol}), we  obtain (\ref{Lax4:eq1})-(\ref{Lax4:eq4}) in  solved  form. Working  similar, we can obtain the inverse solved form of (\ref{Lax4:eq1})-(\ref{Lax4:eq4}) and that proves bi-rationality of the latter.
\end{enumerate}
\end{proof}
The following remarks are in order.
First,  by setting
 \begin{align*}
 x^{2,i}=y^{2,i}=x_2^{2,i}=y_1^{2,i}=0,&& \forall i\in \{1,\ldots,N\},
  \end{align*}
  equations (\ref{Lax4:eq1}), (\ref{Lax4:eq3}) of the hierarchy of difference systems in Proposition \ref{Lax4:prop1} vanish, while equations (\ref{Lax4:eq2}), (\ref{Lax4:eq4}) respectively read
\begin{align} \label{Lax4:aeq5}
x^{1,i+1}_2 y^{1,i}=y^{1,i+1}_1 x^{1,i},\quad q^i x^{1,i}_2-q^{i+1}x^{1,i}=p^i y^{1,i}_1-p^{i+1}y^{1,i},
\end{align}
which coincide respectively with (\ref{Lax:eq2}) and (\ref{Lax:eq1}). So we have obtained a reduction from the hierarchy of difference systems (\ref{Lax4:eq1})-(\ref{Lax4:eq4}) to the hierarchy of difference systems  (\ref{Lax:eq1}),(\ref{Lax:eq2}).
Similarly by setting
\begin{align*}
x^{1,i}=y^{1,i}=x_2^{1,i}=y_1^{1,i}=0,& &\forall i\in \{1,\ldots,N\},
 \end{align*}
 equations (\ref{Lax4:eq3}), (\ref{Lax4:eq4}) vanish while equations (\ref{Lax4:eq1}), (\ref{Lax4:eq2}) respectively read
\begin{align} \label{Lax4:eq5}
x^{2,i}_2 y^{2,i+1}=y^{2,i}_1 x^{2,i+1},\quad q^i x^{2,i}_2-q^{i-1}x^{2,i}=p^i y^{2,i}_1-p^{i-1}y^{2,i},
\end{align}
$i=1,\ldots,N.$ Note that (\ref{Lax4:eq5}) is mapped to (\ref{Lax4:aeq5}) via  $(x^{2,i}_2,y^{2,i}_1,x^{2,i},y^{2,i})\mapsto (x^{1,i}_2,y^{1,i}_1,x^{1,i-2},y^{1,i-2})$.

A second remark is that for the hierarchy of difference systems (\ref{Lax4:eq1})-(\ref{Lax4:eq4}) that is introduced in Proposition \ref{Lax4:prop1} we have $i=1,\ldots,N\geq~3.$ Nevertheless this  system makes perfectly sense even for $N=2$ and although we do not have yet a  Lax matrix  for this case,  we anticipate integrability.

Finally, in the commutative setting it holds $\prod_{i=1}^N x^{2,i}_2=\prod_{i=1}^N x^{2,i}$  and $\prod_{i=1}^N y^{2,i}_1=\prod_{i=1}^N y^{2,i}$  that leads to
\begin{align} \label{Lax4:eq6}
\prod_{i=1}^N x^{2,i}=(p^0)^N=\prod_{i=1}^N x_{2}^{2,i},\quad \prod_{i=1}^N y^{2,i}= (q^0)^N=\prod_{i=1}^N y^{2,i}_1,
\end{align}
where $p^0,$ respectively $q^0,$ are considered functions of $m,$ respectively $n,$ only. Nevertheless, considering (\ref{Lax4:eq6}) together with (\ref{Lax4:eq1})-(\ref{Lax4:eq2}), quadrirationality is not assured as it was the case for the hierarchy of difference systems  (\ref{Lax:eq1}),(\ref{Lax:eq2}). Further restrictions on the dependent variables are required for quadrirationality to be achieved.
\subsection{An integrable hierarchy of difference systems in vertex variables}
The hierarchy of  difference systems of Proposition \ref{Lax4:prop1} constitutes of the sets of equations (\ref{Lax4:eq1})-(\ref{Lax4:eq4}).
The sets of equations (\ref{Lax4:eq1}) and (\ref{Lax4:eq4}) are identically satisfied if we respectively set
\begin{align} \label{Lax5:pot12}
\begin{aligned}
x^{2,i}=p^0 \phi^i_1(\phi^{i+1})^{-1},\quad y^{2,i}=q^0 \phi^i_2(\phi^{i+1})^{-1}, \\
x^{1,i}=p^i\chi^i_1-p^{i+1}\chi^i,\quad y^{1,i}=q^i\chi^i_2-q^{i+1}\chi^i,
\end{aligned} &\quad i=1,\ldots,N.
\end{align}
The functions $\phi^i,\chi^i,$ $i=1,\ldots,N$ can be considered as potential functions. In terms of these potential functions equations (\ref{Lax4:eq1}) and (\ref{Lax4:eq4}) are identically satisfied, while equations (\ref{Lax4:eq2}) and (\ref{Lax4:eq3}) respectively read:
\begin{flalign}
\begin{aligned} \label{Lax5:eq1}
&p^0\left(q^i\phi_{12}^i\left(\phi_2^{i+1}\right)^{-1}-q^{i-1}\phi_1^i\left(\phi^{i+1}\right)^{-1}\right)-
q^0\left(p^i\phi_{12}^i\left(\phi_1^{i+1}\right)^{-1}-p^{i-1}\phi_2^i\left(\phi^{i+1}\right)^{-1}\right)& & \\
&\hspace{0.4cm}=\left(q^{i+1}\chi_{12}^{i+1}-q^{i+2}\chi_1^{i+1}\right)\left(p^i\chi_1^i-p^{i+1}\chi^i\right)-
\left(p^{i+1}\chi_{12}^{i+1}-p^{i+2}\chi_2^{i+1}\right)\left(q^i\chi_2^i-q^{i+1}\chi^i\right),& &
\end{aligned}\\ \vspace{3mm}
\begin{aligned} \label{Lax5:eq2}
&p^0\phi_{12}^i\left(\phi_2^{i+1}\right)^{-1}\left(q^{i-1}\chi_2^{i-1}-q^i\chi^{i-1}\right)+q^0\left(p^{i+1}\chi_{12}^{i+1}-p^{i+2}\chi_2^{i+1}\right)
\phi_2^{i-1}\left(\phi^i\right)^{-1}&&\\
&\hspace{0.4cm}=q^0\phi_{12}^i\left(\phi_1^{i+1}\right)^{-1}\left(p^{i-1}\chi_1^{i-1}-p^i\chi^{i-1}\right)+p^0\left(q^{i+1}\chi_{12}^{i+1}-q^{i+2}\chi_1^{i+1}\right)
\phi_1^{i-1}\left(\phi^i\right)^{-1},& &
\end{aligned}
\end{flalign}
$i=1,\ldots, N$ and constitute a hierarchy of difference systems in vertex variables.
\begin{prop} \label{prop:mpot12}
For the hierarchy  of difference equations in vertex variables (\ref{Lax5:eq1}),(\ref{Lax5:eq2}) it holds
\begin{enumerate}
\item arises as the compatibility conditions of the Lax equation (\ref{lax-eq-ver}), associated with the strong Lax matrix
\begin{align*}
L= \begin{pmatrix}
                      p^1&0& &\cdots &0 & \lambda\, x^{2,N-1}&\lambda\, x^{1,N}\\
                      x^{1,1}&p^2& 0&  &\cdots &0 &\lambda\, x^{2,N} \\
                      x^{2,1}  &x^{1,2}&\ddots &\ddots &   & &0\\
                      %0  &x^{2,2} &x^{1,3}& \ddots  &{} & \vdots\\
                      0  &\ddots &\ddots&  &  &{} & \vdots\\
                      & &  & & \ddots &\ddots &\\
                      \vdots &{} & &\ddots &\ddots  &p^{N-1} &0\\
                      0     & 0 & {}  & &x^{2,N-2}&  x^{1,N-1}    &p^N
                     \end{pmatrix},
\end{align*}
where   $x^{2,i}:=p^0 \phi^i_1(\phi^{i+1})^{-1},$ $y^{2,i}:=q^0 \phi^i_2(\phi^{i+1})^{-1},$
$x^{1,i}:=p^i\chi^i_1-p^{i+1}\chi^i,$  $y^{1,i}:=q^i\chi^i_2-q^{i+1}\chi^i,$ $i=1,\ldots,N$;
\item it is invariant under the following permutations of the dependent variables
\begin{align*}
\tau :(\phi^j,\phi^j_1,\phi^j_2,\phi^j_{12},\chi^j,\chi^j_1,\chi^j_2,\chi^j_{12},p^0,q^0,p^j,q^j)\mapsto (\phi^j,\phi^j_2,\phi^j_1,\phi^j_{12},\chi^j,\chi^j_2,\chi^j_1,\chi^j_{12},q^0,p^0,q^j,p^j),\\
\sigma : (\phi^j,\phi^j_1,\phi^j_2,\phi^j_{12},\chi^j,\chi^j_1,\chi^j_2,\chi^j_{12},p^0,q^0,p^j,q^j)\mapsto (\phi^j_{12},\phi^j_2,\phi^j_1,\phi^j,\chi^j_{12},\chi^j_2,\chi^j_1,\chi^j,q^0,p^0,q^j,p^j),\\
\forall j\in \{1,\ldots,N\},
\end{align*}
i.e. it respects the {\it rombic} symmetry;
\item it is an integrable hierarchy  of difference equations in vertex variables.
\end{enumerate}
\end{prop}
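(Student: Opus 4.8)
The plan is to follow the three-part structure of the proof of Proposition~\ref{prop:mpot}, adapted to the coupled multiplicative and additive potentials of (\ref{Lax5:pot12}).

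For item~(1), I would start from Proposition~\ref{Lax4:prop1}, which asserts that the matrix (\ref{Lax4:eq0}) is a \emph{strong} Lax matrix for the edge hierarchy (\ref{Lax4:eq1})--(\ref{Lax4:eq4}). The substitutions (\ref{Lax5:pot12}) are engineered so that equations (\ref{Lax4:eq1}) and (\ref{Lax4:eq4}) become identities: writing $x^{2,i}=p^0\phi^i_1(\phi^{i+1})^{-1}$ makes each side of (\ref{Lax4:eq1}) collapse, by centrality of $p^0,q^0$ and the cancellation $(\phi^{i+1})^{-1}\phi^{i+1}=1$, to the common value $p^0q^0\phi^i_{12}(\phi^{i+2})^{-1}$, while the additive form $x^{1,i}=p^i\chi^i_1-p^{i+1}\chi^i$ reduces (\ref{Lax4:eq4}) to an identity because the parameters commute. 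Substituting (\ref{Lax5:pot12}) into the entries of (\ref{Lax4:eq0}) then produces precisely the matrix $L$ displayed in the statement. It remains to insert (\ref{Lax5:pot12}) into the two surviving compatibility conditions (\ref{Lax4:eq2}) and (\ref{Lax4:eq3}) and to recognise, after recollecting the left and right factors, the vertex equations (\ref{Lax5:eq1}) and (\ref{Lax5:eq2}). Because (\ref{Lax4:eq0}) is \emph{strong}, the Lax equation (\ref{lax-eq-ver}) for $L$ is equivalent to the full set (\ref{Lax4:eq1})--(\ref{Lax4:eq4}) evaluated on the potentials; as two of these are now identities, this equivalence collapses to (\ref{Lax5:eq1}),(\ref{Lax5:eq2}), which simultaneously yields the Lax property and its converse, i.e.\ strongness.

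For item~(2), I would verify the two symmetries directly on (\ref{Lax5:eq1}),(\ref{Lax5:eq2}). The map $\tau$ interchanges the two lattice directions together with the parameters $p^0\leftrightarrow q^0$, $p^j\leftrightarrow q^j$ and the shift labels $1\leftrightarrow 2$; inspection shows that the left- and right-hand sides of each of (\ref{Lax5:eq1}),(\ref{Lax5:eq2}) are mapped into one another under $\tau$, so the system is invariant. For the diagonal reflection $\sigma$, I would apply $\sigma$ and then, exactly as in the proof of Proposition~\ref{prop:mpot}, act with the backward shift operators ${\bf T}_{-1}{\bf T}_{-2}$ to bring the image back to the original form, thereby showing that $\sigma$ maps (\ref{Lax5:eq1}),(\ref{Lax5:eq2}) to itself up to this lattice shift. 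For item~(3), integrability is then a consequence of (1) and (2) exactly as in item~(4) of Proposition~\ref{prop:mpot}: the strong Lax representation of item~(1) together with the rombic symmetry of item~(2) places the hierarchy on the black-white lattice, the required multidimensional consistency being inherited, through the potentialisation (\ref{Lax5:pot12}), from the compatibility of the underlying edge hierarchy (\ref{Lax4:eq1})--(\ref{Lax4:eq4}) furnished by its strong Lax pair.

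I expect the main obstacle to be item~(1): carrying out the non-commutative substitution of the \emph{coupled} potentials (\ref{Lax5:pot12}) into (\ref{Lax4:eq2}),(\ref{Lax4:eq3}). Because the multiplicative potentials $\phi^i$ and the additive potentials $\chi^i$ appear together, with inverses interleaved between them, the factor ordering must be tracked with care and no term may be tacitly commuted; the delicate point is verifying that the mixed $\phi$--$\chi$ cross terms reassemble \emph{exactly} into the grouped products on the right of (\ref{Lax5:eq1}),(\ref{Lax5:eq2}) rather than into some merely equivalent rearrangement.
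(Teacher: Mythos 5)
Your proposal is correct and follows essentially the same route as the paper: the paper's own proof is the single line ``the proof follows similarly to the proof of Proposition \ref{prop:mpot}'', and your argument is precisely that strategy spelled out --- substitution of the potentials (\ref{Lax5:pot12}) into the strong edge Lax matrix (\ref{Lax4:eq0}) for item (1), direct verification of $\tau$ and the ${\bf T}_{-1}{\bf T}_{-2}$ shift trick for $\sigma$ in item (2), and integrability as a consequence in item (3). If anything, you supply more detail than the paper does, and your computations (e.g.\ both sides of (\ref{Lax4:eq1}) collapsing to $p^0q^0\phi^i_{12}(\phi^{i+2})^{-1}$) check out.
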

\begin{proof}
The proof follows similarly to the proof of Proposition \ref{prop:mpot}.
\end{proof}
The following remarks are in order.
First, By setting
\begin{align*}
\chi^i=\chi^i_1=\chi^i_2=\chi^i_{12}=0,&& \forall i\in \{1,\ldots,N\},
\end{align*}
 the set of equations (\ref{Lax5:eq2}) vanishes while (\ref{Lax5:eq1}) reads:
\begin{align*}
p^0\left(q^i\phi_{12}^i\left(\phi_2^{i+1}\right)^{-1}-q^{i-1}\phi_1^i\left(\phi^{i+1}\right)^{-1}\right)=
q^0\left(p^i\phi_{12}^i\left(\phi_1^{i+1}\right)^{-1}-p^{i-1}\phi_2^i\left(\phi^{i+1}\right)^{-1}\right),
\end{align*}
$i=1,\ldots, N,$ that is mapped to (\ref{eq:pot1}) via $(\phi^i,\phi_1^{i+1},\phi_2^{i+1},\phi_{12}^i)\mapsto (\phi^i,\phi_1^{i-1},\phi_2^{i-1},\phi_{12}^i),$  $\forall i\in \{1,\ldots,N\}.$ Furthermore, if the centrality assumption (\ref{Vertex:eq2}) are imposed, we arrive to the non-commutative lattice-modified Gel'fand-Dikii hierarchy.  To recapitulate, we have obtained  a reduction of the hierarchy (\ref{Lax5:eq1}),(\ref{Lax5:eq2}) to the non-commutative lattice-modified Gel'fand-Dikii hierarchy.
Second, by considering $p^0=q^0=0$
 to the sets of equations (\ref{Lax5:eq1}),(\ref{Lax5:eq2}),  equations (\ref{Lax5:eq2})  vanishes while equations (\ref{Lax5:eq1}) read:
\begin{align*}
\left(q^{i+1}\chi_{12}^{i+1}-q^{i+2}\chi_1^{i+1}\right)\left(p^i\chi_1^i-p^{i+1}\chi^i\right)=
\left(p^{i+1}\chi_{12}^{i+1}-p^{i+2}\chi_2^{i+1}\right)\left(q^i\chi_2^i-q^{i+1}\chi^i\right),
\end{align*}
$i=1,\ldots, N$ that coincide with  (\ref{Vertex:pot2}). Furthermore, if the centrality assumptions (\ref{Vertex:aeq2}) are imposed, we arrive to the non-commutative lattice-Schwarzian Gel'fand-Dikii hierarchy. To recapitulate, we have obtained  a reduction of the hierarchy (\ref{Lax5:eq1}),(\ref{Lax5:eq2}) to the non-commutative lattice-Schwarzian Gel'fand-Dikii hierarchy.

\section{Conclusions} \label{Section:5}
In this article we have introduced a family of Lax matrices $L^{N,k}$   that participate to the  linear problem:
\begin{align} \label{Con:1}
\Psi_2=L^{N,k}(X;P,\lambda)\Psi,\quad \Psi_1=L^{N,k}(Y;Q,\lambda)\Psi.
\end{align}
For $k=1,2$ we derived the corresponding integrable hierarchies of difference systems in non-commuting edge variables and the associated integrable difference hierarchies in vertex variables together with their Lax matrices. The lattice-modified Gel'fand-Dikki hierarchy and the lattice-Schwarzian Gel'fand-Dikki hierarchy, both in non-commuting variables, together with the underlying integrable difference system in edge variables, were obtained for $k=1$. For $k=2,$  we have obtained a seemingly novel hierarchy of  difference systems in edge and vertex variables, that includes both the lattice-modified   and the lattice-Schwarzian Gel'fand-Dikki hierarchies, since the latter are obtained by appropriate reductions.
This contribution clearly raises many new questions to be addressed. Let us conclude this article by mentioning a few of them.

  A first question concerns the continuous limit of the discrete hierarchy (\ref{Lax5:eq1}),(\ref{Lax5:eq2}). We do not know yet the continuous hierarchy of equations that corresponds to (\ref{Lax5:eq1}),(\ref{Lax5:eq2}). Furthermore, the discrete hierarchies obtained from (\ref{Con:1}) for $k>2$ await to be studied and their continuous counterparts  to be identified. For example, when $k=3$ the hierarchy of difference systems in edge variables associated with $L^{N,3}$ is given implicitly by equations (\ref{eq0})-(\ref{eq2}) (with $k=3$) and explicitly reads:
  \begin{align} \label{concl:k=3}
  \begin{aligned}
  x_2^{3,i}y^{3,i+1}=  y_1^{3,i}x^{3,i+1},\\
  q^ix_2^{1,i}-q^{i+1}x^{1,i}=  p^i y_1^{1,i}-p^{i+1}y^{1,i},\\
  q^ix_2^{3,i}-q^{i+3}x^{3,i}+x_2^{2,i+1}y^{1,i}+x_2^{1,i+2}y^{2,i}=  p^iy_1^{3,i}-p^{i+3}y^{3,i}+y_1^{2,i+1}x^{1,i}+y_1^{1,i+2}x^{2,i},\\
  q^ix_2^{2,i}-q^{i+2}x^{2,i}+x_2^{1,i+1}y^{1,i}= p^iy_1^{2,i}-p^{i+2}y^{2,i}+y_1^{1,i+1}x^{1,i},\\
  x_2^{3,i}y^{1,i+3}+x_2^{2,i+1}y^{2,i+3}+x_2^{1,i+2}y^{3,i+3}=  y_1^{3,i}x^{1,i+3}+y_1^{2,i+1}x^{2,i+3}+y_1^{1,i+2}x^{3,i+3},\\
  x_2^{3,i}y^{2,i+2}+x_2^{2,i+1}y^{3,i+2}=  y_1^{3,i}x^{2,i+2}+y_1^{2,i+1}x^{3,i+2},
  \end{aligned}
  \end{align}
$i=1,\ldots, N.$ The first two sets of equations of (\ref{concl:k=3}) are identically satisfied by setting
\begin{align*}
\begin{aligned}
x^{3,i}=p^0\phi_1^i\left(\phi^{i+1}\right)^{-1},&& y^{3,i}=q^0\phi_2^i\left(\phi^{i+1}\right)^{-1},\\
x^{1,i}=p^i\chi_1^i-p^{i+1}\chi^i,  && y^{1,i}=q^i\chi_2^i-q^{i+1}\chi^i,
\end{aligned} && i=1,\ldots, N,
\end{align*}
and in terms of the potential functions $\phi^i,\chi^i,$ $i=1,\ldots, N,$ the remaining sets of equations of (\ref{concl:k=3}) constitute a hierarchy of difference systems in edge and vertex variables.

   For the Lax matrix $L^{N,1}$ we have implicitly presented the hierarchy of associated Yang-Baxter maps (see Proposition \ref{prop1.2}). The fist member of this hierarchy of  maps is a $4-$parametric extension of the $H_{III}^A$ Yang-Baxter map, extended  in the non-commutative domain and it is  presented in Example \ref{example:1}, whereas the second member of this hierarchy is explicitly presented in Example \ref{example:1.2}. We anticipate to present the explicit form of the whole hierarchy of Yang-Baxter maps elsewhere as well as the hierarchy of {\em entwining maps} \cite{Kassotakis:2019} associated with this hierarchy.    Furthermore, an open question is to obtain explicitly the hierarchies of Yang-Baxter maps associated with the Lax matrices $L^{N,k}$ for $k\geq 2.$

    The linear problem (\ref{Con:1}) falls into a class of more general linear problems, namely
\begin{align} \label{Con:1.1}
\Psi_2=L^{N,k_1,a}(X;P,\lambda)\Psi,\quad \Psi_1=M^{N,k_2,b}(Y;Q,\lambda)\Psi,
\end{align}
where
\begin{align} \label{Con:2}
\begin{aligned}
L^{N,k_1,a}(X;P,\lambda):=\sum_{i=0}^{k_1}\nabla^{i}X^{(i)}+\lambda\sum_{i=0}^{k_1}\Delta^{i}X^{(i)},\\
M^{N,k_2,b}(Y;Q,\lambda):=\sum_{i=0}^{k_2}\nabla^{i}Y^{(i)}+\lambda\sum_{i=0}^{k_2}\Delta^{i}Y^{(i)},
\end{aligned}
\end{align}
with $N\in \mathbb{N},$ $k_1,k_2,a,b\in\{0,1,\ldots,N-1\},$ $k_1\geq a,$ $k_2\geq b$ and $\nabla^0=\Delta^0\equiv I_N,$ where $I_N$ the order $N$ identity matrix. Also $X^{(a)}\equiv P$ the diagonal order $N$ matrix with entries the parameters $(X^{(a)})_{i,i}:=p^i$;   $Y^{(a)}\equiv Q$ the diagonal order $N$ matrix with entries the parameters $(Y^{(a)})_{i,i}:=q^i$ and $X^{(j)},$ $Y^{(j)},$ $j=0,1,\ldots N-1, j\neq a,b,$ the order $N$ diagonal matrices defined on Section \ref{secdef:lax}. Note that (\ref{Con:1}) arises as the special case of (\ref{Con:1.1}),(\ref{Con:2}), when $k_1=k_2=k,$ $a=b=0.$  A classification of the linear problems (\ref{Con:1.1}),(\ref{Con:2}),  awaits to be addressed.

Finally,  in the seminal articles \cite{Doliwa_2013,Doliwa_2014}, the geometric interpretation of the KP map in terms of   Desargues maps is provided. The geometric interpretation of the difference systems in edge variables that correspond to Lax matrices $L^{N,k}$ with $k\geq 2$ and  when we assume that $N\in \mathbb{Z},$ is an open question.

 \subsubsection*{{\bf Aknowledgements}} The author would like to thank Theodoros Kouloukas and Maciej Nieszporski for fruitful discussions.

\end{document}